
\pdfoutput=1

%

\documentclass[11pt,a4paper]{article}
\usepackage{times,latexsym}
\usepackage{url}
\usepackage[T1]{fontenc}

%

\usepackage[acceptedWithA]{tacl2021v1}


\usepackage{xspace,mfirstuc,tabulary}

\newif\iftaclinstructions
\taclinstructionsfalse 
\iftaclinstructions

\newcommand{\instr}
\fi

\iftaclpubformat 

\else

\fi


\usepackage{amsmath}
\usepackage{amssymb}
\usepackage{amsthm}
\usepackage{xcolor}
\usepackage{paralist}
\usepackage{thm-restate}

\usepackage{mathtools}
\usepackage{multirow}
\usepackage{bbm}

\DeclarePairedDelimiter\floor{\lfloor}{\rfloor}

\DeclarePairedDelimiter\abs{\lvert}{\rvert}%
\DeclarePairedDelimiter\norm{\lVert}{\rVert}%

\usepackage{stmaryrd}

\makeatletter
\let\oldabs\abs
\def\abs{\@ifstar{\oldabs}{\oldabs*}}
\let\oldnorm\norm
\def\norm{\@ifstar{\oldnorm}{\oldnorm*}}
\makeatother


\newtheorem{theorem}{Theorem}
\newtheorem{lemma}{Lemma}

\newtheorem{corollary}{Corollary}[theorem]

\theoremstyle{definition}
\newtheorem{definition}{Definition}

\def\Snospace~{\S{}}



\newcommand{\AC}{\mathsf{AC}}
\newcommand{\NC}{\mathsf{NC}}
\newcommand{\TC}{\mathsf{TC}}
\newcommand{\C}{\mathsf{C}}

\newcommand{\poly}{\mathrm{poly}}

\renewcommand{\O}{\mathrm{O}}

\newcommand{\dplus}{d_\oplus}



\usepackage{listings}

\lstdefinelanguage{RASP}
{
  morekeywords={select, selector_width, aggregate}
}

\definecolor{OliveGreen}{RGB}{0,102,51}
\lstset{
    columns=flexible,
    keepspaces=true,
    showstringspaces=false,
    basicstyle=\small\ttfamily,
    commentstyle=\color{gray},
    keywordstyle=\color{OliveGreen},
}

\renewcommand{\O}{\mathrm{O}}

\newcommand{\new}[1]{{\color{blue} #1}}
\newcommand{\newB}[1]{{\color{blue} #1}}

\renewcommand{\new}[1]{#1}
\renewcommand{\newB}[1]{#1}

\newcommand{\circuit}{\mathrm{Circuit}}
\newcommand{\gate}{\mathrm{Gate}}
\newcommand{\dir}{\mathrm{Dir}}
\newcommand{\argument}{\mathrm{Arg}}
\newcommand{\xx}{\texttt{X}}
\newcommand{\lequal}{\texttt{<=}}
\newcommand{\gequal}{\texttt{>=}}

\usepackage[linesnumbered,ruled,noend]{algorithm2e}

\newcommand{\sizeof}{\mathrm{sizeof}}

\renewcommand{\rshift}{\mathrm{rshift}}

\usepackage{graphicx}

%
%

\title{The Parallelism Tradeoff: Limitations of Log-Precision Transformers}


\author{William Merrill \\
  Center for Data Science \\
  New York University, New York, NY \\
  \texttt{willm@nyu.edu} \\\And
  Ashish Sabharwal \\
  Allen Institute for AI \\
  Seattle, WA \\
  \texttt{ashishs@allenai.org} \\}


\begin{document}

\maketitle

\begin{abstract}
Despite their omnipresence in modern NLP, characterizing the computational power of transformer neural nets remains an interesting open question. We prove that transformers whose \new{arithmetic precision is logarithmic} in the number of input tokens (and whose feedforward nets are computable using space linear in their input) can be simulated by constant-depth logspace-uniform threshold circuits. This provides insight on the power of transformers using known results in complexity theory. For example, if $\mathsf L \neq \mathsf P$ (i.e., not all poly-time problems can be solved using logarithmic space), then transformers cannot even \new{accurately} solve linear equalities or check membership in an arbitrary context-free grammar \new{with empty productions}. Our result intuitively emerges from the transformer architecture's high parallelizability. We thus speculatively introduce the idea of a fundamental \textbf{parallelism tradeoff}: any model architecture as parallelizable as the transformer will obey limitations similar to it. Since parallelism is key to training models at massive scale, this suggests a potential inherent weakness of the scaling paradigm.
\end{abstract}

\section{Introduction}

This work aims to characterize the computational model implicit in transformer neural networks \citep{vaswani2017attention},
which form the basis of recent breakthroughs in large language models such as BERT~\cite{devlin-etal-2019-bert}, T5~\cite{t5}, and GPT-3~\cite{gpt3}.
What computational primitives can the transformer's components implement, and what problems can the full system solve in aggregate? These questions are important for interpreting transformers in a principled way, understanding potential limitations of their reasoning capabilities, and building trust in deployed transformer-based systems.

Early theoretical work on transformers established their Turing completeness, albeit with assumptions like infinite precision and arbitrarily powerful feedforward subnets~\citep{perez2019on,Dehghani2019UniversalT}.
On the other hand, a strand of more recent work uses techniques from circuit complexity theory to derive strong limitations on the types of problems transformers can solve given restrictions on the form of attention
\new{allowed in the transformer}.
Specifically, \citet{hahn-2020-theoretical} and \citet{hao2022} showed transformers \new{restricted to} hard attention \new{are very limited: they} can only solve problems in a weak complexity class (non-uniform $\AC^0$) that doesn't even contain basic problems like majority of $n$ bits.
\citet{merrill2022SatAttnTC0} extended this to a more general class of ``saturated attention'' transformers with a floating point datatype, and showed a larger class of problems (non-uniform $\TC^0$) as an upper bound.
This motivates analyzing a setting that strikes a middle ground: \emph{Can we characterize transformers whose precision and feedforward nets' computational power are \underline{realistically bounded}, but where attention is also \underline{realistically expressive}?}

\new{An important practical limitation of these prior results is the ``non-uniform'' nature of the considered circuit classes, which makes these classes non-realizable and the findings difficult to interpret. This is because non-uniform $\AC^0$ and $\TC^0$, while highly limited in computation, also contain some problems that are not even decidable, i.e., for which there doesn't exist any exact algorithm. Thus, non-uniform classes cannot be directly compared with standard algorithmic complexity classes such as $\mathsf{P}$, $\mathsf{NP}$, etc. This motivates our second key question: \emph{Can we derive \underline{uniform} upper bounds on transformers?}}


\new{We show that one can achieve both of these goals} by making the modest assumption that all values in the transformer have $\O(\log n)$ precision (where $n$ is the number of input tokens), and, similarly, that transformer's subnetworks are computable in $\O(\log n)$ space.
Log precision is enough to represent the positional encodings at the input layer of the transformer, and to encode pointers to all other positions in the sequence at later transformer layers.
Assuming log precision across all layers captures the idea that
\new{the hidden representations contain a constant number of hidden states whose precision (16 or 32 bits) is small relative to the length of the input (2048 in GPT-3).}
On long sequences, the precision will not be enough to losslessly encode the full input sequence into a single vector. Instead, the processing of the sequence must somehow be distributed in each layer and performed in parallel.

\paragraph{Upper Bound on Transformers.} Our main contribution is proving that log-precision transformers can be simulated by \emph{uniform} constant-depth threshold circuits. Thus, \textbf{such transformers can only solve problems in uniform $\TC^0$}. \new{This characterization is strikingly weak} compared to the Turing-completeness of infinite-precision transformers. Since we believe log precision is more realistic for practical transformers than infinite precision, these results point to the conclusion that transformers are not Turing-complete in practice.

In contrast to past results, our upper bound on transformers is a \emph{uniform} circuit class, enabling direct comparison of log-precision transformers to many natural complexity classes.
These connections reveal specific problems that define the upper limits of log-precision transformers' capabilities, as discussed further in \autoref{sec:implications}.

\new{Intuitively, our upper bound says that log-precision transformers are computationally shallow, and that this shallowness can be understood to emerge from their parallelizability.
Transformers' inherent parallelism is useful for training them efficiently at massive scale, but may limit the complexity of the computations they can express.
We introduce the term \textbf{parallelism tradeoff} to capture this idea, which represents a potential fundamental weakness of the current paradigm of scaling language models.
Formally characterizing reasoning capabilities relevant to language models and understanding whether they likely fall outside upper bounds implied by the tradeoff would clarify the practical implications of this limitation of scaling.

It could also be that the limitations of parallelism are not a curse but a blessing, if they constrain the hypothesis space in a way useful for learning. We have no evidence that this is true, but mention it as an alternate interpretation of the results that could be clarified in future work.
}



\paragraph{Instruction Following and Advice Transformers.}
We also consider an instruction following setting~\citep{gpt3} where the transformer is provided the description of a task along with an input on which to execute the instruction. We construct a practically parameterizable transformer that can  execute instructions perfectly if they are provided in the form of $\TC^0$ circuits. This complements recent work that studies transformers' ability to follow other forms of instructions such as regular expressions \citep{finlayson2022instruction}.

Based on the fundamental property that transformers can correctly \emph{evaluate} any given $\TC^0$ circuit on a given input, we introduce the notion of \emph{advice transformers} akin to advice taking Turing machines. We show that transformers can recognize any (non-uniform) $\TC^0$ language if provided appropriate \new{poly-size} advice.

\paragraph{}
In summary, our findings provide new insights on both the abilities and the limitations of transformers, and bring out bounded precision, threshold computations, and parallelism as key notions for understanding the implicit computational model of transformers in practice.

\new{\paragraph{Roadmap.} Before diving into technical details, we discuss in \autoref{sec:implications} the implications of our results on both fundamental as well as practical abilities of transformers. \autoref{sec:circuits} provides a brief primer on circuits as a model of computation. It then discusses a way of serializing a circuit into a string; we later show how to generate such serializations using a resource-bounded algorithm, which is the key to proving containment of transformers in \emph{uniform} circuit classes. \autoref{sec:bounded-precision-transformers} defines our formal model of bounded-precision transformers. \autoref{sec:nonuniform-bounds} derives our first formal bound on log-precision transformers. This bound involves \emph{non-uniform} circuit families, similar in spirit to prior results in this area. \autoref{sec:uniform-bounds} proves our more technical main result: the first \emph{uniform} circuit complexity upper bound for transformers (specifically, uniform $\TC^0$). Finally, \autoref{sec:lower} provides a \emph{lower bound} on transformers, introduces the notion of an Advice Transformer, and connects these to the machine learning problems of Instruction Learning and Following.
}

\section{Implications of Our Findings}
\label{sec:implications}

Before diving into technical details, we discuss the general implications of our findings on the abilities and limitations of transformers. \new{We will focus here on our main result (\autoref{thm:uniform}), which shows that log-precision transformers are in the complexity class logspace-uniform $\TC^0$.}

\paragraph{The Parallelism~Tradeoff.} One interpretation of complexity classes such as \new{$\NC^0$, $\AC^0$, and} $\TC^0$ is sets of poly-time solvable problems that are parallelizable to a very high degree---they can be solved in parallel in \emph{constant} time with enough parallel processors. This gives some intuitive explanation of our result: log-precision transformers end up in $\TC^0$ because they were designed to be highly parallelizable. Since parallelism is an important property \new{of today's dominant paradigm of training models at} massive scale, this points to the conclusion that any massively scaled up model---transformer or otherwise---\new{will likely} obey restrictions similar to the ones derived here for log-precision transformers. \new{There is thus an important tradeoff between the massive parallelizability of today's networks and their representation power.}

\paragraph{What Transformers Can/Cannot Compute.} \new{Our result places log-precision transformers in the complexity class logspace-uniform $\TC^0$. This has immediate implications on the kinds of problems such transformers can and cannot accurately solve.

Consider any problem $X$ that is complete for a complexity class $\C$ that contains logspace-uniform $\TC^0$. By definition of completeness, every problem log-precision transformers can solve perfectly is efficiently reducible to $X$ and is thus no harder than $X$. This implies that---despite their massive size---the computation performed by such transformers is, for instance, no harder than solving basic $\mathsf{L}$-complete problems like \textbf{graph connectivity}: the problem of checking whether there is a path between two nodes in an undirected graph \citep{Lewis1982SymmetricSC,Reingold2008UndirectedCI}.

By the same token, if $\C$ is strictly larger than logspace-uniform $\TC^0$, then such transformers \emph{cannot} perfectly solve $X$.
Thus, log-precision transformers cannot perfectly solve the following reasoning problems:
\begin{compactitem}
    \item \textbf{Linear equalities}: find $\mathbf x$ s.t. $\mathbf A \mathbf x = \mathbf b$\footnote{\label{footnote:neqP}\new{Assuming logspace-uniform $\TC^0 \neq$ $\mathsf{P}$. Follows because these problems are $\mathsf{P}$-complete~\cite{p-complete-problems}.}}
    \item \textbf{Universal context-free recognition}$^{\ref{footnote:neqP},}$\footnote{\new{Takes both a \new{grammar and a string} as input and return whether the grammar generates the string.} \citet{JONES1976105} demonstrate $\mathsf{P}$-completeness.}
    \item \textbf{Propositional satisfiability} (SAT)\footnote{\new{Assuming logspace-uniform $\TC^0 \neq$ $\mathsf{NP}$. Follows because SAT is $\mathsf{NP}$-complete \citep[cf.][]{handbook-of-sat}.}} 
    \item \textbf{Horn-clause satisfiability} (HORN-SAT)$^{\ref{footnote:neqP}}$
    \item \textbf{AI planning}~\cite{Bylander1991ComplexityRF}
    \item \textbf{Permanent computation}\footnote{Assuming logspace-uniform $\TC^0 \neq$ $\mathsf{\#P}$. Follows because permanent is $\mathsf{\#P}$-complete~\citep{Valiant1979TheCO}. \citet{Allender1999ThePR} shows permanent is not in \emph{logtime}-uniform $\TC^0$.}
\end{compactitem}
This highlights the limits of practical transformers with limited-precision arithmetic, indicating that they are far from being universal or all-powerful as suggested by some prior studies.}

\new{One important caveat about these negative results is that they are asymptotic in nature---they apply for ``large enough'' input size $n$. It's possible for log-precision transformers to solve such problems easily when $n$ is small. Further, these negative results are about exact solutions, but they often also extend beyond this when formal hardness-of-approximation results are known.}

\paragraph{Limitations of Our Formal Model.} Prior formal characterizations of transformers either make unrealistically strong assumptions~\cite{perez2019on,Dehghani2019UniversalT} or place unrealistic restrictions~\cite{hahn-2020-theoretical,hao2022,merrill2022SatAttnTC0}. In contrast, we make only one assumption---namely, all intermediate values in the transformer are limited to $\O(\log n)$ bits, where $n$ is the number of input tokens. \new{We next discuss some implications of this assumption and what our findings mean for practical transformers.}

\new{As mentioned above, our bounds are asymptotic in nature and thus apply when $n$ is sufficiently large.} In practice, transformers use fixed precision at each computation node, which is more restrictive than \new{precision growing with the input sequence length $n$, as} $\O(\log n)$ bits. However, this constant \new{could be} large and thus, for relatively small $n$, our results do not rule out practical transformers solving difficult problems. Our results, however, do show that as $n$ grows sufficiently large, log-precision transformers are fundamentally limited to problems within $\TC^0$ \new{and cannot accurately solve various commonly studied problems mentioned earlier under ``What Transformers Cannot Compute''}. Extending our analysis to small $n$ will help close the gap to practice.

Our formal model is based on a binary classification view of transformers. However, our results apply directly to multi-class classification as well and can be extended to generation problems by viewing, for instance, next word prediction in NLP as a multi-class classification problem. \new{However, if the transformer decoder is allowed to condition on its previous output in a generation problem, then this would violate our formal setup.}

\subsection{\new{Potential Applications}}

\new{\paragraph{Extracting Circuits from Transformers.} \citet{elhage2021mathematical} propose extracting circuits\footnote{\new{Their sense of ``circuit'' is not exactly the formal sense we use in this paper, though the goal of capturing transformers' implicit computational mechanism is the same.}} that capture the computational structure of transformers. Our results suggest threshold circuit families are a good formalism for expressing mechanisms extracted from transformers. Constructively converting transformers to threshold circuits is beyond the scope of the current paper, although we hope to explore this in more detail in future work.}

\paragraph{Testing Separation Candidates in Complexity Theory.} \autoref{thm:uniform} also motivates a paradigm for quickly testing complexity theory conjectures. If a problem is believed to separate $\TC^0$ and $\NC^1$,
a transformer can be trained on problem instances. \new{If the transformer generalizes perfectly to harder instances than it was trained on}, this \new{gives an empirical hint} that the problem is in $\TC^0$, \new{providing evidence against the conjecture.}

\section{Circuit Computation}
\label{sec:circuits}

Let $\{0, 1\}^*$ be the set of finite binary strings.
For $x \in \{0, 1\}^*$, let $\abs{x}$ be its length.
We refer to a function from $\{0, 1\}^*$ to $\{0, 1\}^*$ as a boolean function.
Boolean functions can implement arithmetic operations if we define a semantics for binary strings as numbers.
We will treat the intermediate values in a transformer as binary strings, and the internal operations as boolean functions.

\emph{Circuits} are a model of computation for computing boolean functions of fixed-length binary strings.\footnote{For a mini-tutorial on circuit complexity theory and its relevance to transformers, see \citet{merrill2022SatAttnTC0}.} Formally, a circuit is a directed acyclic computation graph. The leaf nodes represent binary variables and their negations. The internal nodes represent functions in some set $\mathcal G$, and the directed edges represent the flow of function outputs into inputs of other functions. One or more nodes in the circuit are marked such that their value is the output of the circuit.

\begin{definition}
For a set of functions $\mathcal G$, a $\mathcal G$-circuit is a directed acyclic computation graph where the internal nodes have labels from $\mathcal G$.
\end{definition}

\paragraph{Complexity Measures.} The \emph{size} of a circuit is the total number of gates in it, including negation. The \emph{depth} of a circuit is the length of the longest path from any input node to any output node.

\paragraph{Circuit Families.}
A \emph{circuit family} generalizes a circuit to take variable-length binary strings as input. Formally, a circuit family is a sequence of circuits $C_n : \{0, 1\}^n \to \{0 , 1\}$ for $n \in \mathbb N$. A circuit family implicitly recognizes a formal language defined as follows:
\begin{definition}
A circuit family $C_n$ recognizes $L \subseteq \{0, 1\}^*$ if, for all $x \in \{0, 1\}^*$,
$C_{\abs{x}}(x) = 1$ if and only if $x \in L$.
\end{definition}

We now define classes of languages by constraining the complexity of the circuit families needed to recognize them:
\begin{definition}
Let non-uniform $\AC^0$ be the set of $L \subseteq \{0, 1\}^*$ such that $L$ is recognizable by a \new{poly-size}, constant-depth $\{\neg, \wedge, \vee\}$-circuit family.
\end{definition}

For $k \in \mathbb{N}$, a \emph{threshold gate} $\theta_{\leq k}$ takes $m$ input bits and returns whether $\sum_{i=1}^m x_i \leq k$. We define $\theta_{\geq k}$ analogously. For example, $\theta_{{\leq}3}(110011) = 0$.
\begin{definition} \label{def:tc0}
Let $\TC^0$ be the set of $L \subseteq \{0, 1\}^*$ such that $L$ is recognizable by a \new{poly-size}, constant-depth $\{\theta_{\leq k}, \theta_{\geq k}\}_{k \in \mathbb N}$-circuit.
\end{definition}
The gates $\neg$, $\wedge$, and $\vee$ are all just special cases of thresholds, so we can imagine $\TC^0$ circuits to have access to these as well. Thus, $\TC^0$ circuits can implement $\AC^0$ circuits.

\paragraph{Circuit Serialization.} We identify a circuit with its serialization in a formal language that identifies each node's label and adjacency list. We will adopt a specific grammar for concreteness, but our construction can be adapted to other string representations of circuits.

We define a circuit serialization as a \new{traversal of a circuit} ordered by some topological sort. \new{In this serialization, leaf nodes (variables) are represented by the string \texttt{X}. An internal node (gate) is represented in Polish notation by the function it computes (\texttt{AND}, \texttt{OR}, or \texttt{NOT}) followed by a list of pointers to its arguments.
Each argument $\texttt{\&1}^j$ of gate $i$ encodes (in a unary) a zero-indexed pointer to the $j$-th gate in the circuit, where $j < i$. The final node is interpreted as the circuit output.}

To serialize $\{\wedge, \vee\}$-circuits, we use the following grammar, where the $i$ parameter is passed through $\gate[i]$ nonterminals to track the index of the gate in left-to-right order:
\begin{small}
\begin{align*}
    \circuit \ & \to \ \gate[1] \; \gate[2] \; \cdots \; \gate[g] \\
    \gate[i] \ & \to \ \xx \; \mid \; \texttt{NOT} \; \argument[i] \; \mid \; \mathrm{Op} \; \argument[i]^* \\
    \argument[i] \ & \to \ \texttt{\&}\texttt{1}^j \quad \textrm{s.t.} \; j < i \\
    \mathrm{Op} \ & \to \ \texttt{AND} \; \mid \; \texttt{OR}
\end{align*}
\end{small}
In the $\argument[i]$ rule, we enforce that $j < i$ so that arguments must be pointers to already defined gates.
\new{As an example of this serialization language, the circuit for $x_1 \vee \neg x_2 \vee x_3$ is represented as}\footnote{Spaces here (and in the grammar) are added for readability. We will ignore these spaces when passing circuit serializations as inputs to a transformer in \autoref{sec:lower}.}
\begin{verbatim}
    X X X NOT &1 OR & &111 &11
\end{verbatim}
\new{
By convention (cf.~\autoref{sec:circuits}), negations in $\AC^0$ circuits are usually taken to occur at the beginning of the circuit, rather than after $\wedge$ or $\vee$ nodes.\footnote{\new{We can apply De Morgan's laws to force any $\AC^0$ circuit to have this property.}} Our serialization grammar does not enforce this property, but of course any circuit with this property can be serialized by our grammar.}

It is a bit more complicated to serialize threshold circuits.
Formally, a threshold circuit serialization is generated by the following grammar:
%
\begin{small}
\begin{align*}
    \circuit \ & \to \ \gate[1] \; \gate[2] \; \cdots \; \gate[g] \\
    \gate[i] \ & \to \ \xx \; \mid \;
    \dir \; \texttt{1}^k\texttt{0}^{m-k} \; \argument[i]^m \\
    \argument[i] \ & \to \ \texttt{\&}\texttt{1}^j \quad \textrm{s.t.} \; j < i \\
    \dir \ & \to \ \lequal \; \mid \; \gequal
\end{align*}
\end{small}

In the rewrite rule for $\gate[i]$,
$m \in \mathbb{N}$ is the arity of the gate, and $k \leq m$ is its threshold.
The span \texttt{1}$^k$ after $\dir$ can be interpreted semantically as a unary encoding of the parameter $k$ for a threshold gate, padded by \texttt{0}'s to the number of total arguments of gate $i$.
For simplicity, we imagine $\neg$ gates are represented as unary $\theta_{\leq 0}$ gates. Thus, the circuit $\theta_{\geq 1}(x_1, \neg x_2)$ would be represented as
\begin{verbatim}
    X X <= 00 &1 >= 10 & &11
\end{verbatim}
We say a threshold circuit serialization is in \emph{prefix form} if all inputs (\texttt{X}) come before all threshold gates (\texttt{<=} or \texttt{>=}), as is the case in this example.

\paragraph{Uniformity.}

\new{The circuit families we have defined above are \emph{non-uniform}, meaning that we do not enforce that the circuits processing different input sizes must be related in any way. In degenerate cases, non-uniform circuit families can solve undecidable problems\footnote{\new{Consider the unary language $1^n$ such that Turing machine $n$ (under some arbitrary enumeration) halts. This problem is in non-uniform $\AC^0$ since we can hard-code the right answer for each $n$ in $C_n$.}} because they have infinite description length, making them a physically unrealizable model of computation. Complexity theorists have thus introduced \emph{uniform} circuit families. Uniform circuit families are a realizable model of computation with relations to classes in computational complexity and formal language theory.}

Intuitively, in a uniform circuit family, the circuits for different input sizes must be ``somewhat \new{similar}'' to each other. We formalize this \citep[cf.][]{arora2009computational} by saying that there exists a resource-constrained Turing machine that maps the input $1^n$ to a serialization of circuit $C_n$.
\begin{definition}
A language $L$ is $(S(n),I(n))$-space uniformly computable by a circuit model $M$ iff there exists a Turing machine that, for all $n \geq 0$, uses $S(n)$ space to
map $1^n$ to an $M$-circuit recognizing $L$ on inputs of size $I(n)$.
\end{definition}

This notion of uniformity is more general than the standard notion in that the input size $I(n)$ is a function of the problem complexity $n$. The reason for this is that we will apply uniformity to subcomputations with different input sizes $I(n)$ within a larger computation of input size $n$. The standard notion of uniformity corresponds to $I(n) = n$.

Furthermore, we will refer to a circuit family as \emph{uniform} if it \new{is} uniformly computable with $S(n) = \O(\log n)$ \citep[cf.][]{arora2009computational}. We can define uniform versions of $\AC^0$ and $\TC^0$ by adopting the previous definitions exactly, but also enforcing uniformity.
For the rest of the paper we will clarify whether we mean the uniform or non-uniform variant of $\TC^0$ \new{when unclear from context}, since both classes \new{will come up.}


\section{Bounded-Precision Transformers}
\label{sec:bounded-precision-transformers}

A \emph{transformer} \citep{vaswani2017attention} is a neural network architecture made up of a constant number of \emph{transformer layers}. A transformer layer is a module that computes self-attention over a sequence followed by an elementwise transformation of the output vectors.

\subsection{Precision and Space}

We will assume that each transformer is resource bounded in terms of the \emph{precision} of each value it computes and, for some of our results, the \emph{space} it uses for the computation of key operations such as embedding, attention, and activation. Specifically, we will assume precision $p$, i.e., the values at all layers, as well as the outputs of all key intermediate operations in it (attention, activation, arithmetic  operators, etc.), are represented using $p$ bits. This is a realistic assumption as, in practice, today's transformers are typically limited to the 64-bit precision of the underlying hardware. Formally, we define $p$-precision as follows:
\begin{definition} \label{def:p-precision}
A $k$-ary function $f: x_1, \ldots, x_k \mapsto y$ is \emph{$p$-precision} if \new{$x_1, \ldots, x_k, y \in \{0,1\}^*$ have size at most $p$ bits}, and $f$ can be computed \new{by a $p$-space-bounded} Turing machine.
\end{definition}

This says the size of the function input and output are bounded below $p$. Similarly, the intermediate space used by the computation must also be bounded below $p$. Thus, higher precision computations cannot somehow be hidden inside $f$.

\autoref{def:p-precision} naturally applies to functions with bounded arity $k$. We will also need to define $p$ precision for the summation operator in the transformer, which adds $n$ different \new{floats of size $p$}.\footnote{\new{Our proof also goes through if the transformer weights are integers, as is sometimes done \citep{dettmers2022gptint}.}}
Adding $n$ \new{floats} can blow up the precision needed to represent their sum. For example, imagine adding the floating points $1 \cdot 2^0 + 1 \cdot 2^c$. We obtain $(2^c + 1) \cdot 2^0$, whose mantissa takes $c + 1$ bits to represent. In practice, computers do not preserve full precision in such situations: instead, small terms like $1 \cdot 2^0$ are discarded. \new{Thus, we define the transformer's addition operation $\oplus$ to be similarly approximate (and thus preserve precision); see \autoref{sec:addition}.}

\subsection{Transformer Definition}

\new{
\subsection{Attention Heads} \label{sec:head}

The core building block of a transformer is an attention head. We define this at a high level of abstraction as follows:

\begin{definition} \label{def:head}
A $p$-precision attention head is specified by a binary $p$-precision \emph{similarity} function $s : \{0, 1\}^p \times \{0, 1\}^p \to \{0, 1\}^p$.
\end{definition}

Let $\mathbf h_1, \ldots, \mathbf h_n \in \{0, 1\}^p$ be the input sequence to a $p$-precision attention head\new{, and let $\oplus$ be approximate floating-point addition (\autoref{sec:addition}).}

\begin{definition} \label{def:head-computation}
For all $\ell \geq 0$,
a $p$-precision attention head $H^{\ell+1}_h$ computes a vector $\mathbf a^{\ell+1}_{ih} \in \{0, 1\}^p$ via
\begin{equation*}
    \mathbf a^{\ell+1}_{ih} = \bigoplus_{j=1}^n \frac{s(\mathbf h^\ell_i, \mathbf h^\ell_j)}{Z_i} \cdot \mathbf h^\ell_j ,
\end{equation*}
where $Z_i = \bigoplus_{j=1}^n s(\mathbf h^\ell_i, \mathbf h^\ell_j)$.
\end{definition}

Standard transformer attention heads \citep{vaswani2017attention} are a special case of this definition where $s$ is scaled dot-product similarity between keys and queries. Standard transformers also have a linear or affine \new{value} function applied to each $\mathbf h^\ell_j$ in the sum over $j$. By its affineness, the value function can, without loss of generality, be removed from the attention head and considered to be part of the transformer layer (i.e., applied to the output of the attention head).

\subsection{Transformer Layers}

A $p$-precision transformer layer is then a tuple of heads and a function $f$ used to combine them.

\begin{definition}[$p$-precision transformer layer] \label{def:layer}
A $p$-precision transformer layer is a tuple $L^{\ell+1} = \langle H_1, \cdots, H_k, f \rangle$, where each $H_h$ is an attention head and $f : \left( \{0, 1\}^p \right)^k \times \{0, 1\}^p \to \{0, 1\}^p$ is a $p$-precision \emph{activation} function.
\end{definition}

A $p$-precision transformer layer can be understood to define a sequence of vectors $\mathbf h^{\ell+1}_1, \ldots, \mathbf h^{\ell + 1}_n$ in terms of an input sequence of vectors $\mathbf h^{\ell}_1, \ldots, \mathbf h^{\ell}_n$ (coming from the previous layer in the transformer) by first computing $k$ attention heads in parallel and then combining their output using $f$.
The first $k$ inputs to $f$ will correspond to the attention head outputs, and the additional input is the original input from the previous layer.
Recall that $\mathbf a^{\ell+1}_{ih}$ is the output of head $H^{\ell+1}_{ih}$ on input $\mathbf h^\ell$ at position $i$.
The function computed by a transformer layer can be described formally as follows.

\begin{definition}[Transformer layer computation]
\label{def:layer-computation}
For $\ell \geq 0$, a $p$-precision transformer layer $L^{\ell+1}$ recurrently computes the output sequence $\mathbf h_1^{\ell+1}, \ldots, \mathbf h_n^{\ell+1}$ as a function of the inputs $\mathbf h_1^{\ell}, \ldots, \mathbf h_n^\ell$, where, for $1 \leq i \leq n$, the $i$-th component is computed according to
\begin{equation*}
    \mathbf h^{\ell+1}_i = f(\mathbf a^{\ell+1}_{i1}, \ldots, \mathbf a^{\ell+1}_{ik}, \mathbf h_i^{\ell}) .
\end{equation*}
\end{definition}

$f$ can be understood to encapsulate layernorm, residual connections, and the feedforward sublayer of a standard transformer \citep{vaswani2017attention}.
$\mathbf h_i^{\ell}$ is given to $f$ to allow residual connections.
As mentioned in \autoref{sec:head}, $f$ can also encapsulate the value function for each head.
}


\new{
\subsection{Transformer Encoder}}

Finally, we define a transformer of depth $d$ as a cascade of $d$ transformer layers:

\begin{definition}[$p$-precision transformer]
A $p$-precision transformer over alphabet $\Sigma$ is a pair consisting of a $p$-precision position embedding function\footnote{To apply the normal notion of $p$-precision to inputs outside $\{0, 1\}^*$, we imagine elements of $\Sigma$ are encoded as integers $\leq \abs{\Sigma}$ in binary, and natural numbers are represented as integers $\leq n$. Thus, we assume $\log \abs{\Sigma} + \log n \leq p$.} $\phi : \Sigma \times \mathbb N \to \{0, 1\}^p$ and a $d$-tuple of $p$-precision transformer layers \new{$\langle L^1, \ldots, L^d \rangle$}.
\end{definition}

For a position embedding function $\phi$ and $w \in \Sigma^n$, let $\phi(w)$ be the position-wise broadcasted embedding of $w$: for $1 \leq i \leq n$, $\phi_i(w) \triangleq \phi(w_i, i) $.

\begin{definition}[Transformer computation]
A transformer $\left( \phi, \langle L^1, \cdots L^d \rangle \right)$ computes the following function of a string $w \in \Sigma^*$:
\begin{equation*}
    T(w) = (L^d \circ L^{d-1} \circ \dots \circ L^1)(\phi(w)) .
\end{equation*}
\end{definition}

We will use $n$ to denote the length of $w$,
and take the transformer's depth $d$ to be fixed w.r.t.\ $n$.

The \textbf{input} to the transformer can thus be represented with $N = n \log \abs{\Sigma}$ bits using a binary encoding for the vocabulary. The circuits we construct in subsequent sections to simulate transformers will also have input size $N$. We will assume transformers have \textbf{log-precision} relative to the size of the input, specifically $\O(\log N)$-precision. Since $\abs{\Sigma}$ is fixed (typically 30000 in practice), we will think in terms of $\O(\log n)$-precision. Thus, by \autoref{def:p-precision}, all of the intermediate functions of such transformers are computable in $\O(\log n)$ space and output (at most) these many bits. Note that this is enough precision to represent positional encodings and for each position to point to a constant number of other values, but not enough precision for non-lossy pooling of the entire input into a single value.


\paragraph{Relationship to Practical Transformers.} Our \new{log-precision} transformers do not enforce that \new{$s$  (\autoref{def:head}) and $f$ (\autoref{def:layer}) follow the transformer structure}. However, a feedforward net whose primitive operations (e.g., scalar multiplication) are defined over $\O(\log n)$-size numbers can be computed in $\O(\log n)$ space. Thus, bounded-precision practical transformers are a special case of our \new{log-precision transformers}. This makes our setup appropriate for proving upper bounds on transformers, which is our main contribution.


\section{Log-Precision Transformers as Non-Uniform Threshold Circuits}
\label{sec:nonuniform-bounds}

We first show that log-precision transformers can be simulated by \emph{non-uniform} threshold circuits, before presenting the more technical \emph{uniform} version of the results in \S\ref{sec:uniform-bounds}.
The initial non-uniform result extends the findings of \citet{merrill2022SatAttnTC0}, who showed that \emph{saturated} attention transformers\footnote{Saturated attention is uniform attention over a subset of the prior layer nodes.} can be simulated in $\TC^0$. Here, we remove the simplifying saturated attention assumption and other restrictions on the underlying datatype. Instead, we show that our log-precision assumption is enough to prove that a transformer can be simulated in $\TC^0$ with any attention function.

\citeauthor{hao2022} observed that any boolean function of $\O(\log n)$ bits can be computed by a poly$(n)$ size circuit. We extend this to $m$-bit outputs, which is both more convenient and more efficient than constructing $m$ separate boolean circuits:

\begin{lemma}[Extended from \citealp{hao2022}] 
\label{lem:hao-extended}
Let $f : \{0, 1\}^* \to \{0, 1\}^m$ be a function. For all $c \in \mathbb{R}^+$ and $n \in \mathbb{N}$, there exists an AND/OR circuit of size at most $n^c + c \log n + m$ and depth $3$ that computes $f$ on inputs of size $c \log n$.
\end{lemma}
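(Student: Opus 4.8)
The plan is to build the circuit directly from the truth table of $f$, expressed in disjunctive normal form, and then count gates carefully to match the claimed size bound.

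First I would set $k = c \log n$ as the input length, so there are at most $2^k = n^c$ distinct inputs $x \in \{0,1\}^k$. For each such input $x$, I would form a single conjunction $t_x$ over the $k$ input literals (using $x_j$ if the $j$-th bit of $x$ is $1$ and $\neg x_j$ otherwise) which evaluates to $1$ exactly on $x$. Note that the $k$ literals and their negations are the leaf nodes, which the circuit model in \autoref{sec:circuits} provides for free (leaves represent variables \emph{and} their negations), so these cost nothing toward the gate count; I would just need to be slightly careful if one instead charges for negations, in which case an extra $k = c \log n$ term absorbs them — this is exactly why the bound has a $+c\log n$ summand.

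Next, for each output bit $r \in \{1, \dots, m\}$, I would take the OR over all $x$ in the support set $S_r = \{x : f(x)_r = 1\}$ of the corresponding term $t_x$. This gives $m$ output gates. The key accounting step: the conjunction gates $t_x$ can be \emph{shared} across all $m$ output disjunctions, so we pay for at most $2^k \leq n^c$ AND-gates total (one per input pattern, and only for patterns that appear in at least one $S_r$), plus $m$ OR-gates, for a total of at most $n^c + m$ gates — and if negation gates are charged, at most $k = c\log n$ more, giving $n^c + c\log n + m$ as stated. The depth is $3$: one level for negations (leaves-to-literals), one level of AND, one level of OR; if negations are free leaves it is depth $2$, which is $\leq 3$. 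I should double-check the edge case where some $S_r$ is empty (output constant $0$) or all of $\{0,1\}^k$ (output constant $1$), which can be handled with a trivial gate or by noting the bound is an upper bound.

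I do not expect a genuine obstacle here — the result is essentially the standard DNF-size bound for functions on few inputs, lightly repackaged for $m$-bit outputs. The only thing requiring care is the bookkeeping: making sure the shared conjunctions are counted once rather than $m$ times (otherwise one would get $mn^c$, which is too big), and reconciling the exact constant/depth with whatever convention the paper uses for counting negation gates, so that the stated $n^c + c\log n + m$ and depth $3$ come out exactly right rather than merely up to constants.
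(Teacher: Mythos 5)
Your proof is correct and follows essentially the same approach as the paper: a shared DNF with one NOT gate per input bit ($c\log n$ of them), one AND gate per input pattern (at most $n^c$), and one OR gate per output bit ($m$), for depth $3$. The key observation you flag---sharing the AND terms across all $m$ output disjunctions rather than duplicating them---is exactly the ``combined DNF representation'' the paper uses to get $n^c + c\log n + m$ instead of $m n^c$.
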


\begin{proof}
Like \citet{hao2022}, we construct a circuit using a DNF representation of $f$ on inputs of size $c \log n$, except we use a combined DNF representation for all output bits of $f$. The DNF formula has at most $2^{c \log n} = n^c$ terms. The circuit has a NOT gate for each input bit, an AND gate for each DNF term, and, for each of the $m$ output bits, an OR gate combining the outputs of those AND gates (i.e., DNF terms) for which that bit is $1$.
\end{proof}


We now use \autoref{lem:hao-extended} to prove the following non-uniform result. We note that the proof goes through even if the notion of $p$-precision (\autoref{def:p-precision}) is relaxed to not require computability in space $p$. This requirement will, however, become important for our subsequent result in \autoref{sec:uniform-bounds}.

\begin{theorem}[Non-uniform]
\label{thm:nonuniform}
Any $c \log n$-precision depth-$d$ transformer operating on inputs in $\Sigma^n$ can be simulated by a threshold circuit family of depth $3 + (9 + 2\dplus) d$.
\end{theorem}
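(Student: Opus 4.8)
The plan is to simulate the transformer layer by layer, replacing each ``block'' of computation (the position embedding, each attention head, and each activation function) by a bounded-depth threshold circuit, and then compose these blocks. The key enabling fact is \autoref{lem:hao-extended}: since every value in a $c\log n$-precision transformer is a string of at most $c\log n$ bits, any $p$-precision function of a constant number of such arguments is a boolean function of $\O(\log n)$ input bits and therefore has a poly-size, depth-$3$ AND/OR circuit. This immediately handles the position embedding $\phi$ (each $\phi_i(w)$ depends on $w_i$ and on $i$, so it is a function of $\O(\log n)$ bits) and the activation function $f$ (which takes $k+1$ arguments of $p$ bits each, again $\O(\log n)$ bits total), each at depth $3$.

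The main work is simulating a single attention head $H^{\ell+1}_h$, which computes $\mathbf a^{\ell+1}_{ih} = \bigoplus_{j=1}^n \frac{s(\mathbf h^\ell_i,\mathbf h^\ell_j)}{Z_i}\cdot \mathbf h^\ell_j$ with $Z_i = \bigoplus_{j=1}^n s(\mathbf h^\ell_i,\mathbf h^\ell_j)$. I would proceed in stages. First, in parallel over all $j$, compute the $n$ scores $s(\mathbf h^\ell_i,\mathbf h^\ell_j)$: each is a $p$-precision binary function of two $p$-bit arguments, hence depth $3$ by \autoref{lem:hao-extended}. Second, compute $Z_i$ by an iterated $\oplus$-sum of $n$ values; this is the crux, since $\oplus$ is the approximate floating-point addition of \autoref{sec:addition}, and summing $n$ of them must be shown to be computable in threshold circuits of depth proportional to $\dplus$ (a quantity presumably defined in that section as the depth cost of one layer of iterated $\oplus$-addition). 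Third, in parallel over $j$, compute the normalized weights $s(\mathbf h^\ell_i,\mathbf h^\ell_j)/Z_i$ and multiply by $\mathbf h^\ell_j$ — each a constant-arity $p$-precision function, so depth $3$ again. Fourth, take the iterated $\oplus$-sum of the $n$ resulting vectors, another $\dplus$-depth block. Summing the depth contributions of these stages, together with the layer's activation function and accounting for the head-combination, should give the $9 + 2\dplus$ per-layer figure, with an additive $3$ for the input embedding layer; multiplying by $d$ layers yields the claimed depth $3 + (9 + 2\dplus)d$.

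Concretely, I would first state and use (from \autoref{sec:addition}) a lemma that iterated approximate addition $\bigoplus_{j=1}^n x_j$ of $n$ many $p$-bit floats is computable by a threshold circuit of depth $\dplus$ and poly size — this is what makes the two summation stages tractable in constant depth. Then I would carefully bookkeep depths: score computation ($3$) $\to$ $Z_i$ ($\dplus$) $\to$ division/value product ($3$) $\to$ aggregation sum ($\dplus$) for the head, giving $6 + 2\dplus$ so far; then combining the $k$ heads plus the residual input through $f$ costs another $3$; and I would need one more $3$-depth block somewhere (e.g., to broadcast/route $\mathbf h^\ell$ appropriately, or a wiring layer between sub-blocks), totaling $9 + 2\dplus$ per transformer layer. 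Stacking $d$ such layers after the depth-$3$ embedding layer gives $3 + (9+2\dplus)d$.

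The main obstacle I anticipate is the iterated $\oplus$-summation: showing that approximate floating-point addition of $n$ operands, each $p = \O(\log n)$ bits, sits in (non-uniform) $\TC^0$ with a clean depth bound $\dplus$. Exact iterated integer addition is a classical $\TC^0$ construction, but the floating-point approximation semantics (aligning mantissas by exponent, discarding small terms, handling overflow via $\overflow$/$\clamp$) requires care: one must bucket operands by exponent, sum within buckets, and combine across the $\O(\log n)$ possible exponent values, each step being a threshold computation. I expect this to be where the real content lies, and it is plausibly pushed into a dedicated lemma in \autoref{sec:addition} that the present proof simply invokes; the rest of the argument is then routine depth-accounting and composition, with the only subtlety being that each intermediate quantity genuinely fits in $p$ bits so that \autoref{lem:hao-extended} applies at every stage.
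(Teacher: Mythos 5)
Your proposal takes essentially the same route as the paper's proof: simulate the transformer layer by layer, invoking \autoref{lem:hao-extended} three times per layer for the bounded-arity, $\O(\log n)$-bit functions (the score $s$, the normalized product $\tfrac{s}{Z_i}\cdot\mathbf h^\ell_j$, and the activation $f$, each a poly-size depth-$3$ circuit) and invoking twice the appendix's result that iterated $\oplus$-addition of $n$ many $\O(\log n)$-precision floats is in $\TC^0$ (depth $\dplus$ each), with the base case being a single depth-$3$ application for the position embedding. You also correctly identify that the real technical content lives in the iterated floating-point addition lemma of \autoref{sec:addition}; the paper indeed treats that as a black box in this proof.

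One small slip in your own bookkeeping at the end: you correctly arrive at $3 + \dplus + 3 + \dplus = 6 + 2\dplus$ for the head and then add $3$ for $f$, which already gives $9 + 2\dplus$. The extra ``one more $3$-depth block somewhere (broadcast/routing)'' you tack on would push the total to $12 + 2\dplus$, contradicting your own claim of $9 + 2\dplus$; it is also unnecessary and does not appear in the paper's proof, where routing between sub-blocks is just wiring and costs no gate depth. Otherwise the decomposition and depth accounting match the paper exactly.
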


\begin{proof}
Let $w \in \Sigma^n$ be the input of a $c \log n$-precision transformer.
We show by induction that we can construct a composition of \new{constant-depth, poly-size} threshold circuits to compute each layer of this transformer. Thus, any constant-depth transformer will be computable by a constant-depth threshold circuit.

In the base case of layer $0$ and token $i$, we construct gates representing the constant $i$ encoded in binary. We can then compute $\mathbf h_i^0 = \phi(w_i, i)$ using \autoref{lem:hao-extended}, yielding a \new{poly-size} depth-3 circuit. 

In the inductive case of computing layer $\mathbf h_i^{\ell+1}$ for $1 \leq \ell+1 \leq d$, we note that each vector output of layer $\mathbf h_i^\ell$ has size (at most) $c \log n$ bits because of the log-precision assumption.

\new{We first fix a head $\mathbf a^{\ell+1}_{ik}$ (\autoref{def:head-computation}) to simulate.}
Applying \autoref{lem:hao-extended}, \new{we can compute $s(\mathbf h_i^\ell, \mathbf h_j^\ell)$} with a \new{poly-size} depth-3 circuit, in parallel for all $j$.
\new{Since $n$ floats with $c \log n$ precision can be approximately added in $\TC^0$ (\autoref{sec:addition}),}
we can construct a $\TC^0$ circuit of depth $\dplus$ to compute $Z_j$. Since $s(\mathbf h_i^\ell, \mathbf h_j^\ell), Z_i$, and $\mathbf h_i^\ell$ all have $c \log n$ bits, we can compute $\frac{s(\mathbf h_i^\ell, \mathbf h_j^\ell)}{Z_i} \mathbf h_j^\ell$ with a \new{poly-size} depth-3 circuit;\footnote{This may seem counterintuitive since multiplication of two $n$-precision numbers is outside $\AC^0$. However, here we leverage the fact that the precision is $c \log n$.} we do this in parallel for all $j$. Next, we \new{again use the fact that approximate addition of $n$ floats is in $\TC^0$ to} compute \new{$\mathbf a_{ih}^{\ell+1}$ as the approximate sum} over $j$ \new{with a} depth-$\dplus$ circuit.

\new{We now simulate a layer $\mathbf h^{\ell+1}_i$ (\autoref{def:layer-computation}) in terms of its constituent heads.}
\new{Since all} arguments of $g$ have size $c \log n$, we apply \autoref{lem:hao-extended} to compute $g$ with a \new{poly-size} depth-3 circuit, yielding $\mathbf h_i^{\ell+1}$. We repeat this in parallel for all $i$. This completes the inductive step new \new{to compute} all values in the $\ell+1$-st layer with a circuit depth of $9 + 2\dplus$.

Aggregating the circuit over all $d$ layers, the overall circuit depth is $3 + (9 + 2\dplus) d$.
\end{proof}

\begin{corollary}[Non-uniform]
\label{cor:nonuniform}
Any log-precision transformer can be simulated by \new{a non-uniform $\TC^0$ circuit family}.\footnote{\new{Here, a $\TC^0$ circuit family is a constant-depth, poly-size circuit family computing some function $\{0, 1\}^* \to \{0, 1\}^*$. While we define $\TC^0$ for decision problems in \autoref{def:tc0}, it is standard and well-defined to extend the same term to refer to circuit families computing functions as well \citep{hesse2001division}.}}
\end{corollary}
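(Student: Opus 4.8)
The plan is to obtain this as an immediate consequence of \autoref{thm:nonuniform}. First I would unpack the hypothesis: a log-precision transformer is, by definition, a $c \log n$-precision transformer for some fixed constant $c$, and (per \autoref{sec:bounded-precision-transformers}) its depth $d$ is fixed with respect to $n$. \autoref{thm:nonuniform} then directly supplies a threshold circuit family simulating it, of depth exactly $3 + (9 + 2\dplus) d$. Since $d$ is a constant and $\dplus$ is the constant depth of the fixed $\TC^0$ subcircuit for approximate $n$-float addition (\autoref{sec:addition}), this quantity is a constant independent of $n$. Hence the simulating family is constant-depth.

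Next I would verify the size bound, which is the only part requiring any care. Inspecting the construction in the proof of \autoref{thm:nonuniform}: each invocation of \autoref{lem:hao-extended} produces a circuit of size $n^c + c \log n + m = \poly(n)$ (using that the relevant values have $m \leq c \log n$ bits), and each approximate-addition gadget is a poly-size $\TC^0$ circuit. The full simulation uses $\O(1)$ such gadgets per (head, position) pair and per position within each of the $d$ layers, i.e.\ $\O(n)$ gadgets overall; wiring together polynomially many poly-size circuits into constant depth yields a poly-size circuit. Therefore the family is poly-size, constant-depth, and built from threshold gates --- exactly (non-uniform) $\TC^0$, understood in the function-family sense of the accompanying footnote and \citet{hesse2001division}. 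Restricting to the single designated output gate recovers the decision-problem version of \autoref{def:tc0} when desired.

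I do not anticipate a genuine obstacle: the corollary is essentially a re-reading of \autoref{thm:nonuniform} with ``depth $= \O(1)$'' and ``size $= \poly(n)$'' made explicit, using that $d$ and $\dplus$ are constants. The only points to stay honest about are the aggregation step --- confirming that summing poly-size costs across all $n$ positions and the constant number of layers and heads stays polynomial --- and the fact that \autoref{lem:hao-extended} handles $m$-bit outputs directly (rather than via $m$ separate boolean circuits) without blowing up depth or size; both are immediate from the stated bounds.
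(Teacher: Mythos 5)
Your proposal is correct and is exactly the paper's (implicit) reasoning: the paper states \autoref{cor:nonuniform} with no separate proof, as the immediate consequence of \autoref{thm:nonuniform} once one notes that $d$ and $\dplus$ are $n$-independent constants and that the construction in the proof visibly produces poly-size circuits. One small quibble: the count of ``$\O(n)$ gadgets overall'' undercounts the $s(\mathbf h_i^\ell,\mathbf h_j^\ell)$ invocations, which are $\Theta(n^2)$ per head per layer since they range over ordered pairs $(i,j)$; this does not affect the conclusion, since it is still polynomial.
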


\section{Log-Precision Transformers as \emph{Uniform} Threshold Circuits}
\label{sec:uniform-bounds}

We will now extend the argument from the last section to show that $\O(\log n)$-precision transformers can be simulated by uniform constant-depth threshold circuits by capitalizing on the assumption that $\phi, s,$ and $f$ are log-precision, and thus can be computed in $\O(\log n)$ space.
The overall proof idea is similar, but due to the uniformity condition, the proof becomes substantially more technical. We must not just show the existence of a threshold circuit family computing a transformer, but also show that this circuit family can be generated by a log-space Turing machine.



We first extend \autoref{lem:hao-extended} to respect uniformity:

\begin{lemma} \label{lem:ff} \label{lem:eval}
Let $f : \{0, 1\}^* \to \{0, 1\}^m$ be a linear-space computable function. There exists a Turing machine that, for all $n \in \mathbb{N}$ and \new{$c \in \mathbb{R}^+$,} uses at most \new{$c \log n + \log m$} space to map input $1^n$ to a circuit of size at most $n^c + c \log n + m$ and depth $3$ that computes $f$ on inputs of size \new{at most} $c \log n$.
\end{lemma}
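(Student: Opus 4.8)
The plan is to refine the proof of \autoref{lem:hao-extended} by tracking the space usage of a Turing machine that, given $1^n$, writes out the serialization of the depth-$3$ DNF circuit described there. Recall that circuit has three pieces: (i) a NOT gate for each of the $c\log n$ input bits, (ii) one AND gate per DNF term, of which there are at most $2^{c\log n} = n^c$, and (iii) one OR gate per output bit $j \in \{1,\dots,m\}$, each wired to exactly those AND gates whose corresponding assignment makes $f_j$ equal to $1$. The size bound $n^c + c\log n + m$ and depth $3$ are already established; the only new content is the $c\log n + \log m$ space bound on the generating machine, so the entire argument is a bookkeeping exercise on the serialization grammar from \autoref{sec:circuits}.

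First I would describe the enumeration order: the machine emits the $c\log n$ leaf nodes \texttt{X}, then the $n^c$ AND gates in lexicographic order of the assignment $a \in \{0,1\}^{c\log n}$ they represent (each AND gate pointing back to the appropriate literal or negated-literal leaves), and finally the $m$ OR gates, where OR gate $j$ points to AND gate $a$ iff $f(a)$ has its $j$-th bit set. The key observation is that each of these steps requires only a counter over the current gate index (which ranges up to $n^c + c\log n + m$, representable in $O(\log n + \log m)$ bits — but we can be more careful: the AND-gate loop counter is an assignment $a$ of exactly $c\log n$ bits, the leaf loop counter needs $\log(c\log n)$ bits, and the OR-gate loop needs an additional $\log m$-bit counter over output bits plus the $c\log n$-bit inner loop over assignments), plus the ability to evaluate $f$ on a $c\log n$-bit input, which by hypothesis is linear-space computable and hence runs in $O(\log n)$ space. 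I would then tally these contributions to confirm the total fits within $c\log n + \log m$ space (absorbing the lower-order $\log\log$ terms and constants, as is standard for space bounds — or, if the paper wants the bound literally, noting that the dominant terms are the $c\log n$-bit assignment register and the $\log m$-bit output-bit register, which can share storage across the phases since the phases are sequential).

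The main obstacle, such as it is, is simply being precise about which registers are live simultaneously and making sure the pointer encodings (the unary \texttt{\&1}$^j$ back-references from \autoref{sec:circuits}) can be written incrementally without storing the whole adjacency list — since a pointer value $j$ can be as large as $n^c$, writing it in unary takes $n^c$ tape cells of \emph{output}, but output tape is free and only a $\log$-size counter is needed to drive the unary write-out. I would remark explicitly that output-tape cells do not count toward the space bound, which is what makes the construction work despite the circuit itself being polynomially large. Finally, I would note the construction reuses the exact circuit of \autoref{lem:hao-extended}, so correctness (that the emitted circuit computes $f$) is inherited and need not be re-argued; only the resource bound is new. This lemma then feeds directly into the uniform simulation of transformer layers in the remainder of \autoref{sec:uniform-bounds}, exactly as \autoref{lem:hao-extended} fed into \autoref{thm:nonuniform}.
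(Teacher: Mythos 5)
Your proposal is correct and takes essentially the same approach as the paper: both generate the same depth-$3$ DNF circuit from \autoref{lem:hao-extended} by a three-phase loop (leaves, AND gates over the $n^c$ assignments, OR gates over the $m$ output bits), and both bound the space by observing that only a $c\log n$-bit assignment counter and a $\log m$-bit output index need to be stored while $f$ is evaluated as an $O(\log n)$-space subroutine. Your explicit remark that the write-only output tape is exempt from the space bound (so unary pointers of length up to $n^c$ can be streamed out with only a $\log$-size driving counter) is a detail the paper leaves implicit, but it is the standard convention and does not constitute a different argument.
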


\begin{proof}
We give the proof in the form of an algorithm to construct a circuit as a function of $n$ and then justify its correctness and space complexity.

\paragraph{} \underline{Algorithm.}
We first print $2c \log n$ nodes representing unnegated and negated input nodes.\footnote{We ignore the initial unnegated input nodes when considering the size of the circuit.}

Now, we need to show how to construct nodes corresponding to $n^c$ DNF terms.
To this end, we loop over all possible inputs $x \in \{0, 1\}^{c \log n}$ by maintaining the $c \log n$ bit binary representation of $x$ (initialized with $0^{c \log n}$) and incrementing it by $1$ at each step of the loop. We create a new $\wedge$ node $i$ with $c \log n$ arguments, defined as follows. For $j \in [c \log n]$, we create an argument pointer to (unnegated) node $j$ if $x_j = 1$ and to (negated) node $c \log n + j$ otherwise.

Now, we construct nodes computing each of the $m$ output nodes. We loop over $k \in [m]$, constructing a single node for each $k$. We loop over all $x \in \{0, 1\}^{c \log n}$ analogously above to construct a list of arguments. By our linear-space computability assumption and because $x$ has $c \log n$ bits, we can compute $f(x)$ as a subroutine in $\O(\log n)$-space to obtain $f_k(x)$.
If $f_k(x) = 1$, we print node $2c \log n + j$ as an argument of node $k$.

\paragraph{}
\underline{Correctness.}
We show that this Turing machine maps input $n$ to a serialized circuit computing $f$ on inputs of size $n$.
\new{The first layer simply produces unnegated and negated input values.} \new{The second layer then} produce all possible DNF terms. \new{Finally,} node $k$ of the \new{third layer} computes the disjunction over all terms $x$ such that $f_k(x) = 1$. Thus, node $k$ of the \new{third layer} computes $f_k$.

\paragraph{}
\underline{Log Space.}
To complete the proof, we justify that $M$ uses $\O(\log n \new{+ \log m})$ space. Looping over $x \in \{0, 1\}^{c \log n}$ is accomplished by treating $x$ as a binary number initialized to $0$ and incrementing it \new{at} each step.
Thus, the loop pointer for building the DNF terms takes $c \log n$ space to store.
For building the \new{$m$} output nodes, we maintain a similar loop pointer as well as an index $k \leq m$, taking \new{$c \log n + \log m$} space. 
Thus, the \new{overall} algorithm uses \new{$c\log n + \log m$} space.

\paragraph{}
Thus, $M$ uses \new{$c \log n + \log m$} space to map $1^n$ to a circuit of size at most $n^c + c \log n + m$ and depth $3$ that computes $f$ on size $c \log n$ inputs.
\end{proof}


We can leverage this lemma to derive the \emph{uniform} analog of \autoref{thm:nonuniform}, as follows.

\begin{theorem}[Uniform, main result]
\label{thm:uniform}
Any $c \log n$-precision depth-$d$ transformer operating on inputs in $\Sigma^n$ can be simulated by a logspace-uniform threshold circuit family of depth $3 + (9 + 2\dplus) d$.
\end{theorem}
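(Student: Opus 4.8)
The plan is to repeat the inductive construction of \autoref{thm:nonuniform}, but to verify at each stage that the subcircuits can be emitted by a single logspace Turing machine $M$ on input $1^n$, using \autoref{lem:ff} in place of \autoref{lem:hao-extended}. The key observation that makes uniformity possible is precisely the log-precision assumption: $\phi$, $s$, and $f$ are all $\O(\log n)$-space computable by \autoref{def:p-precision}, so each is a linear-space computable function on its $\O(\log n)$-bit inputs, and hence \autoref{lem:ff} applies to emit a depth-$3$ subcircuit for each of them in $\O(\log n)$ space. Similarly, the approximate-addition subcircuits of depth $\dplus$ for computing $Z_i$ and $\mathbf a^{\ell+1}_{ih}$ come from an explicit $\TC^0$ construction (\autoref{sec:addition}) which I will assume is itself logspace-uniform.

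The main work is bookkeeping the global wiring. First I would fix a canonical layout: the circuit is laid out layer by layer ($0$ through $d$), and within layer $\ell$ the gates are grouped by position $i \in [n]$, then by head $h \in [k]$, then by the internal gates of each of the constant-many invocations of \autoref{lem:ff}. Because the depth and number of heads are constants and each \autoref{lem:ff} block has size at most $n^c + \O(\log n)$, the index of any gate is computable by an $\O(\log n)$-space arithmetic expression in its coordinates $(\ell, i, h, \dots)$; $M$ can therefore translate the "local" argument pointers produced by \autoref{lem:ff} into "global" pointers $\texttt{\&}\texttt{1}^j$ by adding the appropriate base offset, which is a sum of a constant number of terms each bounded by $\poly(n)$ and hence storable in $\O(\log n)$ bits. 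The one subtlety is that the inputs to a \autoref{lem:ff} block for layer $\ell+1$ are the outputs of blocks at layer $\ell$ (the vectors $\mathbf h^\ell_j$, plus, for $f$, the residual $\mathbf h^\ell_i$); $M$ handles this by, when it needs to reference input bit $t$ of such a block, computing the global index of the gate in layer $\ell$ that produced that bit — again a closed-form $\O(\log n)$-space computation. I would also note the base case: the constant $i$ in binary is trivially printable in $\O(\log n)$ space, and then $\mathbf h^0_i = \phi(w_i, i)$ is emitted via \autoref{lem:ff}.

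Finally I would tally resources: the machine loops over $\ell \le d$ (constant), $i \le n$, $h \le k$ (constant), and within each invokes \autoref{lem:ff} or the addition construction, all of which run in $\O(\log n)$ space and whose outputs $M$ streams to its output tape while maintaining only $\O(\log n)$ bits of loop counters and offset arithmetic. Reusing space across the nested loops keeps the total at $\O(\log n)$. The depth accounting is identical to \autoref{thm:nonuniform}: three layers for the base embedding, and $9 + 2\dplus$ per transformer layer (depth $3$ for $s$, depth $\dplus$ for $Z_i$, depth $3$ for the scaled products, depth $\dplus$ for the head sum, depth $3$ for $f$), giving $3 + (9 + 2\dplus)d$ overall. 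I expect the main obstacle to be stating the global-indexing scheme cleanly enough that the $\O(\log n)$-space claim is transparent — in particular making sure that when layer $\ell+1$ wires back to the outputs of layer $\ell$, the offset of the target gate is a genuinely closed-form function of the loop variables rather than something requiring $M$ to re-simulate earlier layers. This is where a careful choice of the deterministic layout (fixed block sizes, fixed ordering) does all the work.
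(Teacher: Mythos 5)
Your proposal takes essentially the same route as the paper: induct over layers, use \autoref{lem:ff} (instead of \autoref{lem:hao-extended}) to emit the depth-$3$ subcircuits for $\phi$, $s$, and $f$ in logspace, invoke the logspace-uniform $\TC^0$ iterated-addition construction for the two $\oplus$ reductions, and handle global wiring by padding every subcircuit to a fixed size $z$ (resp.\ $z'$, $z''$) so that the index of any target gate is a closed-form arithmetic expression in the loop variables $(\ell, i, h, j)$, storable and computable in $\O(\log n)$ bits. The "main obstacle" you flag---making the cross-layer offset arithmetic transparently logspace---is exactly what the paper's explicit index formulas $w_\ell + z(jk+h)$, $w_\ell + znk + z'h$, etc.\ are there to resolve, and your depth accounting matches the paper's.
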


\begin{proof}
We will provide a proof by induction over transformer layers $\ell$ that there is a Turing machine $M$ operating in $\O(\log n)$ space that, on input $1^n$, outputs a circuit that simulates the transformer's computation on inputs of size $n$. This circuit is identical to the one in the proof of \autoref{thm:nonuniform}, and thus has the same circuit depth.

In the base case, we use log space to track a counter maintaining the current token $i$ (between $1$ and $n$) throughout the circuit construction. We construct gates encoding the constant $i$ in binary. We can then apply \autoref{lem:ff} to construct a Turing machine that maps $1^n$ to a constant-depth threshold circuit computing $\mathbf h^0_i = \phi(w_i, i)$.

In the inductive case, we assume we can output in $\O(\log n)$ space a circuit computing every value \new{$\mathbf h^\ell_i$} in the previous layer $\ell$. We will show that we can, in $\O(\log n)$ space, now output a circuit computing every value in layer $\ell + 1$.

\new{As in \autoref{thm:nonuniform}, we first fix a head $\mathbf a_{ih}^{\ell+1}$ to simulate.
Recall (\autoref{def:head-computation}) that
\begin{equation*}
    \mathbf a^{\ell+1}_{ih} = \bigoplus_{j=1}^n \frac{s(\mathbf h^\ell_i, \mathbf h^\ell_j)}{Z_i} \cdot \mathbf h^\ell_j .
\end{equation*}
}
By \autoref{lem:ff}, we can generate a depth-$3$ circuit of size at most $z = n^{c'} + c' \log n + 1$, where $c' = 2c$ (since the input to $f$ is of size $2c \log n$) that computes $s(\mathbf h^\ell_i, \mathbf h^\ell_j)$ for specific $i, j$. We do this sequentially for $1 \leq j \leq n$ \new{and $1 \leq h \leq k$}, padding each circuit with unused nodes so that each one has size exactly $z$, and the $z$-th node corresponds to the output.
Thus, the indices of the output nodes for each of the columns will be
\new{$w_\ell + z(jk + h)$}
for $1 \leq j \leq n$, where $w_\ell$ is the index of the last output node $\mathbf h^\ell_n$ of the previous layer.

At this point, we use the fact that for $p = c \log n$, the $p$-precision approximate sum of $n$ $p$-precision numbers can be computed by a uniform threshold circuit \new{(\autoref{sec:addition})}. We can thus use a Turing machine as a sub-routine to generate, on input $1^n$, a \new{$k$ threshold circuits, where each has} size $z'$ that computes an $\oplus$ gate over $n$ items of precision $p$ each. We set the inputs of \new{circuit $h$} to be nodes
\new{$w_\ell + z(jk + h)$} for $1 \leq j \leq n$. By construction, this yields the \new{normalizing constants} $Z_i = \bigoplus_{j=1}^n s(\mathbf h^\ell_i, \mathbf h^\ell_j)$, whose value is located at the node at index \new{$w_\ell + znk + z'$ for head $h$}.

Using $p$-precision arithmetic operator circuits, we can now also generate a circuit to compute $\frac{s(\mathbf h^\ell_i, \mathbf h^\ell_j)}{Z_i} \mathbf h^\ell_j$ for each $1 \leq j \leq n$ and \new{$1 \leq h \leq k$}, by using index
\new{$w_\ell + z(jk + h)$}
as before for the value of $s(\mathbf h^\ell_i, \mathbf h^\ell_j)$ and index \new{$w_\ell + znk + z'h$} for the \new{normalizing constant $Z_i$ of head $h$}. Here too we use circuits of identical size $z''$, making \new{$w_\ell + k(zn + z' + z''i)$} the index of the output nodes of these $n$ circuits. Next, we again employ a $\oplus$ circuit of size $z'$, similar to the computation of $Z_i$, to compute the sum of these $n$ values. Finally, we compute $h^{\ell+1}_i$ by applying $f$ via \autoref{lem:ff}.

Note that this requires keeping only $\ell, i,$ and $n$ in memory, each of which takes $\O(\log n)$ bits. 

We repeat this process for all $1 \leq i \leq n$ to compute the entire $\ell+1$ layer, which finishes the inductive step: if we can output a circuit computing layer $\ell$ in $\O(\log n)$ space, then we can do the same for layer $\ell + 1$.
\end{proof}


\new{Because the depth derived in \autoref{thm:uniform} is constant with respect to $n$, it follows that:}

\begin{corollary}[Uniform, main result]
\label{cor:uniform}
Any log-precision transformer can be simulated \new{by a uniform $\TC^0$ circuit family}.
\end{corollary}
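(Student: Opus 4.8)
The plan is to obtain this as an essentially immediate consequence of \autoref{thm:uniform}. First I would unpack the phrase ``log-precision transformer'': as discussed in \autoref{sec:bounded-precision-transformers}, this means a transformer whose precision is $\O(\log N)$ where $N = n \log \abs{\Sigma}$ is the input length in bits, and since $\abs{\Sigma}$ is fixed this is $\O(\log n)$. Hence there is a constant $c$ such that the transformer is $c \log n$-precision, and by assumption its depth $d$ is fixed with respect to $n$. Applying \autoref{thm:uniform} with this $c$ and $d$ then yields a logspace-uniform threshold circuit family that simulates the transformer, of depth $3 + (9 + 2\dplus) d$.

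Next I would observe that this depth is constant in $n$: $d$ is fixed, and $\dplus$ is the (constant) depth of the uniform threshold circuit computing the approximate sum of $n$ floats of $p$-bit precision from \autoref{sec:addition}. So the family is constant-depth. It then remains only to confirm that the family is poly-size: the circuits constructed in the proof of \autoref{thm:uniform} are built by composing, a number of times that is $\O(d \cdot n \cdot k)$ with $d$ and $k$ fixed, the depth-$3$ poly-size circuits from \autoref{lem:ff} together with the poly-size approximate-addition and arithmetic-operator circuits; concatenating polynomially many blocks each of size $\poly(n)$ gives a circuit of size $\poly(n)$. A constant-depth, poly-size, logspace-uniform threshold circuit family is exactly a uniform $\TC^0$ circuit family (in the function-computing sense noted in the footnote to \autoref{cor:nonuniform}), so the corollary follows.

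The only real subtlety---and the step I would be most careful about---is the bookkeeping. I would want to make sure that ``$\O(\log n)$-precision'' genuinely instantiates to ``$c \log n$-precision for a fixed constant $c$'' so that \autoref{thm:uniform} applies verbatim, and that the size parameters appearing in that proof (the block sizes $z, z', z''$) are each $\poly(n)$ and merely add up, rather than multiply, as we concatenate them across the constantly many layers, so that no super-polynomial blowup occurs. Neither point is hard: the gate count grows only additively under concatenation, there are only $\O(d n k)$ blocks with $d, k = \O(1)$, and uniformity is inherited directly from the $\O(\log n)$-space Turing machine exhibited in \autoref{thm:uniform}. I expect the entire argument to be a short paragraph once these observations are spelled out.
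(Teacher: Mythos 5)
Your proposal is correct and follows essentially the same route as the paper, which simply notes that the depth $3 + (9 + 2\dplus)d$ from \autoref{thm:uniform} is constant in $n$ and concludes the result directly. The extra poly-size check you flag is in fact automatic: a Turing machine using $\O(\log n)$ space to emit the circuit $C_n$ runs in $\poly(n)$ time and hence can only write $\poly(n)$ output symbols, so logspace-uniformity already entails poly-size without any block-by-block accounting.
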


\section{Lower Bounds for Instruction Following and Advice Transformers}
\label{sec:lower}

\new{So far, we have shown that uniform $\TC^0$ is an upper bound for log-precision transformers. Is this upper bound tight, i.e., also a lower bound? While we do not answer this question here, we address a related question as a first step: we construct a transformer that can evaluate $\TC^0$ circuits on binary inputs, showing that transformers can compute any $\TC^0$ function when their input is augmented with the right ``instructions''. }

\new{More formally, we} consider the \textbf{Circuit Value Problem (CVP)}~\citep{ladner1975circuit}, also referred to as the Circuit Evaluation Problem, where the input is a boolean circuit $C$ and a string $x \in \{0, 1\}^n$, and the task is to return the value of $C(x) \in \{0, 1\}$. This problem is known to be complete for the class $\mathsf{P}$ under $\AC^0$ reductions~\citep{ladner1975circuit}. We will assume $C$ is serialized as described in \autoref{sec:circuits} and prove that log-precision transformers can evaluate any $\TC^0$ circuit. Note that this is an extension of the typical CVP since the circuit has threshold gates, not just standard AND/OR gates.

It is known that LSTMs cannot evaluate boolean formulae \citep{merrill2021counter}, a special case of the CVP. In contrast, we show that transformers can.

To demonstrate the practicality of our lower bound construction, we will not just prove the existence of transformers that can evaluate $\TC^0$ circuits
but also specify concrete choices for the positional embedding scheme and the class of attention functions that are sufficient to do so.

\paragraph{Fractional Positional Embeddings.}
For a vector $\mathbf x$ and scalar $y$, let $\langle \mathbf x, y \rangle$ be the vector \new{appending $y$ onto $\mathbf x$}.\footnote{\new{I.e., $\langle \mathbf x, y \rangle_i = x_i$ for $1 \leq i \leq \abs{\mathbf x}$, and $y$ if $i = \abs{\mathbf x} + 1$.}}
For $\sigma \in \Sigma$, let $v(\sigma)$ be the one-hot embedding of $\sigma$ into $\mathbb{R}^{\abs{\Sigma}}$.
For $w \in \Sigma^*$ and $i \geq 1$, the fractional positional embedding at token $i$ is
\begin{equation*}
    \phi(w_i, i) = \langle v(w_i), i / n \rangle .
\end{equation*}

\paragraph{Saturated Attention.} We imagine $f(\mathbf h_i^\ell, \mathbf h_j^\ell)$ is computed via saturated attention \citep[cf.][]{merrill2022SatAttnTC0}, which provides a simple model of the types of attention we can expect to be learned in transformers \citep{merrill2020parameter}. First, queries are computed as $\mathbf q_i = \mathbf Q \mathbf h_i^\ell$, and then keys $\mathbf k_j = \mathbf K \mathbf h_j^\ell$ Define the \new{dot-product} attention score $\sigma_{ij} = \mathbf q_i^\top \mathbf k_j$. We can then define saturated attention as
\begin{equation*}
    s(\mathbf h_i^\ell, \mathbf h_j^\ell)
    =
    \begin{cases}
    1 & \textrm{if} \; \sigma_{ij} = \max_k \sigma_{ik} \\
    0 & \textrm{otherwise.}
    \end{cases}
\end{equation*}
After normalization, saturated attention creates a distribution that is uniform over a subset of positions. Thus, it is capable of parameterizing hard attention, uniform attention over the full sequence, and various attention patterns in between.

\paragraph{Simple Pooling Functions.} For simplicity, we assume pooling functions $f$ are thresholded linear functions of their inputs. Thus, they could be implemented by a feedforward neural net. Without loss of generality, we let attention heads have a value function, which can be folded into the pooling function from the last layer \new{(see \autoref{sec:bounded-precision-transformers}).}

\paragraph{Terminology.} We use \emph{input node} to mean a token of type \texttt{X} and \emph{gate node} to mean a token of type $\mathrm{Dir}$. We call a token of type \texttt{\&} an \emph{argument}.

\paragraph{} We are now ready to present the main result. Our construction below is specific to circuits serialized in prefix form (see \S\ref{sec:circuits}), but it can be extended to other serializations as well.


\begin{lemma}
\label{lemma:circuit-eval}
For all $d$,
there exists a transformer with fractional positional embeddings, saturated attention, thresholded linear pooling functions, and depth $2d$ that,
for any threshold circuit $C$ of depth $d$ serialized in prefix form,
maps input $\langle C, x \rangle$ to the value $C(x)$.
\end{lemma}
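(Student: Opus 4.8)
\emph{Approach.} The transformer evaluates the serialized circuit one depth level at a time, spending two layers per level, for $2d$ layers in all. The fractional positional embedding carries the weight here: the scalar $i/n$ at position $i$ lets each token situate itself relative to the input length and, together with saturated attention, supports a toolkit of routing operations---attending to the first or last position with a given property, to one's predecessor or successor, and, given suitable auxiliary encodings, to a position named by a value one has computed---in the spirit of the saturated-attention constructions underlying \citet{merrill2022SatAttnTC0}. We use such operations freely; assembling them within the model is the technical heart of the argument, and we return to it below.

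\emph{Parsing the serialization.} In an initial phase, folded into the first layers, every token determines its role from its symbol and position: an input node \texttt{X}, a gate node \texttt{<=} or \texttt{>=}, or part of an argument (the marker \texttt{\&}, a field bit, or a pointer bit). Using attention we then extract the static structural data: each \texttt{X} or gate node obtains its \emph{node index} (its rank among the indexable tokens, kept as a fraction of $n$); each argument $\texttt{\&}\texttt{1}^j$ obtains its pointer value $j$, read off as $j/n$ from the length of its run of \texttt{1}s; and each gate node obtains its direction and identifies both its field $\texttt{1}^k\texttt{0}^{m-k}$ and its block of $m$ argument tokens (those lying between it and the next gate node, which is well defined in prefix form). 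Each input node also reads its bit from $x$.

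\emph{Level-by-level evaluation.} Every indexable node carries a pair $(\textit{known?},\textit{value})\in\{0,1\}^2$, initialized to $(1,\textit{bit of }x)$ at input nodes and $(0,0)$ at gate nodes; a gate will be updated only once all its arguments are known, so no node ever holds a wrong value. Round $\ell$ occupies layers $2\ell-1$ and $2\ell$. In layer $2\ell-1$, every argument token attends to the node named by its pointer (matching $j/n$ against the node indices) and copies that node's current pair into itself. In layer $2\ell$, every gate node attends uniformly over its $m$ argument tokens and forms the means $\tfrac1m\sum_i\textit{known?}_i$ and $\tfrac1m\sum_i\textit{value}_i$, and in parallel attends uniformly over its $m$ field tokens and forms the mean of the ``is a \texttt{1}'' indicator, which equals $k/m$; since $\tfrac1m\sum_i\textit{value}_i$ and $k/m$ share the denominator $m$, a thresholded linear pooling function compares them---the \texttt{<=}/\texttt{>=} flag fixing the sense of the inequality---to produce the gate's value, and sets \textit{known?} to $1$ exactly when $\tfrac1m\sum_i\textit{known?}_i=1$. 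An induction on $\ell$ gives that after round $\ell$ precisely the gates of depth at most $\ell$ are known and carry their correct values; hence after round $d$ every gate is evaluated and the output (last) gate node holds $C(x)$.

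\emph{Main obstacle.} The level-by-level scheme is the easy part; the work is realizing everything above inside the restricted model---saturated attention, thresholded linear pooling, fractional positional embeddings---with a \emph{fixed} number of heads even though the fan-in $m$ and the lengths of the unary fields and pointers are unbounded, and within exactly $2d$ layers. The tight spots are (i) dereferencing the unary pointers, and symmetrically having each gate gather precisely its own argument and field tokens, which forces the structural data (node indices, the owner gate of each argument/field token, block boundaries) to be encoded as fractions of $n$ that saturated attention can match on; (ii) evaluating a threshold gate with a pooling function that cannot multiply, handled by the common-denominator-$m$ device above; and (iii) fitting the one-time parsing into the first layers---and surfacing the output gate's value where the transformer's output is read---without spending extra depth. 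I expect (i)---designing the family of fractional ``addresses'' so that each routing step reduces to a single saturated-attention selection---to be what demands the most care.
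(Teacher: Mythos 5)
Your proposal takes essentially the same route as the paper's proof: both encode positions, node indices, pointers, and arities as fractions of $n$, use saturated attention to route between them, and evaluate the circuit level by level with two transformer layers per circuit depth---one to dereference pointers and one to aggregate and threshold at each gate via a common-denominator comparison of $c_i/m_i$ against $k_i/m_i$. The only differences are bookkeeping: the paper folds all parsing (node indices, pointer decoding, arity/threshold extraction, reading $x_i$) into the base-case pair of layers rather than a separate phase, and it simply recomputes every gate in every round, relying on the depth induction rather than an explicit \emph{known?} flag.
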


\begin{proof}
We will construct a pair of two transformer layers that evaluate all the nodes at depth $\ell$ in the threshold circuit, for any $\ell$. It follows that a transformer of depth $2d$ can compute the value $C(x)$.

\paragraph{}
\underline{Base Case: Input Nodes}. We use an attention layer to attend uniformly over all positions with value returns $1$ if $w_i = \texttt{X}$ and $0$ otherwise. This head computes $\#(\texttt{X}) / n$, where $\#(\texttt{X})$ is the number of occurrences of $\texttt{X}$ in $w$. A second layer, then, at input node $i$, computes the positional embedding of the token representing input value $x_i$:
\begin{equation*}
    \frac{1 - \#(\texttt{X}) + i}{n} .
\end{equation*}
We attend to this position to retrieve $x_i$.
After these layers, each input node $i$ stores its value $x_i$.

We also use the base-case layers to construct an attention head that, at the $i$-th node, counts the fraction of tokens (out of $n$) that are nodes to the left of the current node. Thus, the column corresponding to node $i$ stores the value $i/n$.

At each gate node $i$, we use two more attention heads to find the index of the next $\texttt{\&}$ to the right and then count the fraction of tokens before it that are \texttt{1}. This head thus computes $k_i/m_i$ where $k_i$ is the threshold value of gate $i$ and $m_i$ is its arity.


Finally, using the first attention layer, we have each \texttt{1} node attend to the first argument symbol \texttt{\&} to its left and retrieve its index $p/n$. Then, in the second attention layer, each argument attends uniformly over all nodes with values $p/n$. The net effect is for each argument to store $j/n$, i.e., the pointer it is encoding in unary as \texttt{\&1}$^j$.

\paragraph{}
\underline{Inductive Case: Gate Nodes}. By our inductive assumption over prior layers, all tokens corresponding to circuit nodes at depth $\leq \ell$ contain their appropriate value. We now construct $2$ transformer layers to evaluate gate nodes at depth $\ell + 1$.

In the first attention layer, each argument token attends to the closest gate node $i$ to its left, which is the gate it belongs to. Recall from the base case that argument token \texttt{\&} already stores $j/n$, where $j$ is the pointer value it encodes. Each argument token now attends with query $j/n$ to retrieve from node $j$ its already computed value.

The second attention layer applies at gate nodes, not arguments. At gate $i$ of arity $m_i$, we set the attention $s(i, j)$ to indicate whether argument $j$ belongs to gate node $i$, which holds for exactly $m_i$ arguments. We set the attention value at argument $j$ to be the binary value of node $j$, which was retrieved in the previous paragraph. Thus, the attention head computes $c_i/m_i$, where $c_i$ is the number of arguments of node $i$ that are $1$. We repeat this for all gate nodes.

At this point, we have both the count of true inputs to gate node $i$ ($c_i/m_i$) and, from the base case, the threshold parameter of gate $i$ ($k_i/m_i$). Thresholding $(c_i - k_i)/m_i$ at $0$ allows us to decide, based on whether $\dir$ is $\lequal$ or $\gequal$, whether the current gate node should output a $0$ or a $1$. Repeating this for all gates at layer $\ell+1$ completes the inductive step: we can evaluate all gate nodes in this layer.
\end{proof}


\begin{theorem}
\label{thm:circuit-eval}
Depth-$2d$ transformers can solve CVP for depth-$d$ $\TC^0$ circuits.
\end{theorem}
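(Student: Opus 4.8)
The plan is to obtain \autoref{thm:circuit-eval} as an essentially immediate consequence of \autoref{lemma:circuit-eval}. The Circuit Value Problem for depth-$d$ $\TC^0$ circuits asks, given a serialization of a threshold circuit $C$ of depth $d$ together with an input string $x \in \{0,1\}^n$, to output $C(x)$. A $\TC^0$ circuit is in particular a threshold circuit in the sense of \autoref{sec:circuits}, so \autoref{lemma:circuit-eval} applies directly: it produces, for each $d$, a single transformer of depth $2d$ (with fractional positional embeddings, saturated attention, and thresholded linear pooling) that maps $\langle C, x\rangle$ to $C(x)$ whenever $C$ has depth at most $d$ and is serialized in prefix form. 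Since the lemma's transformer is fixed once $d$ is fixed and handles every depth-$d$ circuit, this is exactly a transformer solving CVP on depth-$d$ $\TC^0$ instances.

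The only point needing a remark is the \emph{prefix form} hypothesis in \autoref{lemma:circuit-eval}. I would argue this is without loss of generality: recall (\autoref{sec:circuits}) that a serialization is in prefix form when all input tokens \texttt{X} precede all threshold-gate tokens, and that a serialization is just a traversal of the circuit in \emph{some} topological order. Because leaf (input) nodes have no incoming edges, one can always choose a topological sort that lists all leaves first and then the gates; this yields a prefix-form serialization of the same circuit. Re-indexing the gates accordingly leaves the computed function unchanged, since the \texttt{\&1}$^j$ pointers refer to positions in the chosen order and the order is still a valid topological sort. Hence any depth-$d$ $\TC^0$ circuit admits a prefix-form serialization to which \autoref{lemma:circuit-eval} applies. (Alternatively, as noted just before the lemma, the construction can be adapted to other serializations directly, but the prefix-form reduction is the cleanest route.)

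There is no real obstacle here; the content is all in \autoref{lemma:circuit-eval}. The one thing I would be careful to state explicitly is the quantifier order — that a \emph{single} depth-$2d$ transformer works for all depth-$d$ circuits simultaneously, so that "depth-$2d$ transformers can solve CVP for depth-$d$ $\TC^0$ circuits" is literally witnessed — and that $d$ is a constant with respect to the input length $n$, consistent with the fixed-depth convention for transformers adopted in \autoref{sec:bounded-precision-transformers}.
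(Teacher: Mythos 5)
Your proposal is correct and follows exactly the paper's route: \autoref{thm:circuit-eval} is stated as an immediate consequence of \autoref{lemma:circuit-eval}, with no additional argument given in the paper beyond the lemma itself. Your extra remark that prefix form is without loss of generality (via a topological sort listing leaves first) and your note on quantifier order are both accurate and harmless elaborations of what the paper leaves implicit.
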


\subsection{Instruction Following}
\label{subsec:instruction-following}

CVP is closely related to \emph{instruction learning}~\citep{gpt3} and \emph{instruction following} tasks~\citep{finlayson2022instruction}. \new{The latter task setup provides} a transformer two inputs\new{:} a regular expression $r$ as an ``instruction'', and \new{$z \in \{0, 1\}^*$}. \new{The goal of the task is to} return whether $z$ belongs to the regular language represented by $r$. Viewed from this lens, the circuit evaluation setup asks: \emph{Can transformers follow instructions provided in the form of a circuit?} As discussed below, our result says the answer is \emph{yes} for all constant depth threshold circuits. This, to the best of our knowledge, provides the first non-trivial lower bound for transformers in the instruction learning setting.

Formally, an instruction $I$ is any description\footnote{\new{Formally, a function description is a fixed-size program to compute that function under some model of computation.}} of a function $f_I$ of $\{0,1\}^*$. We say a transformer correctly follows an instruction $I$ if, for all $x \in \{0,1\}^*$, it correctly computes $f_I(x)$ on input $\langle I, x \rangle$. A non-uniform instruction description is a family of length-specific descriptions $\{I_n\}_{n=1}^\infty$. We say a transformer correctly follows a non-uniform instruction family $\{I_n\}$ if, for all $n$ and all $x \in \{0,1\}^n$, it correctly computes $f_I(x)$ on input $\langle I_n, x \rangle$. The non-uniform description $\{I_n\}$ may take any form. When it forms a $\TC^0$ circuit family, we refer to it as a $\TC^0$ instruction description. \new{Since \autoref{thm:circuit-eval} constructs a transformer that can evaluate any $\TC^0$ circuit, it follows that:}

\begin{corollary}
\label{cor:instruction-following}
There exists a depth-$2d$ transformer that can correctly follow any depth-$d$ $\TC^0$ instruction description.
\end{corollary}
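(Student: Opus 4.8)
The plan is to observe that this is essentially a restatement of \autoref{thm:circuit-eval} once we unpack the definition of a $\TC^0$ instruction description. Let $\{I_n\}_{n=1}^\infty$ be a depth-$d$ $\TC^0$ instruction description, so that each $I_n$ is a depth-$d$ threshold circuit and, by definition, $f_I(x) = I_{\abs{x}}(x)$. Fix the single depth-$2d$ transformer $T$ produced by \autoref{lemma:circuit-eval}; the key point is that this transformer depends only on $d$, not on the particular circuit. For any $n$ and any $x \in \{0,1\}^n$, the circuit $I_n$ is a depth-$d$ threshold circuit, so applying \autoref{lemma:circuit-eval} (equivalently, \autoref{thm:circuit-eval}) with $C = I_n$ gives that $T$ on input $\langle I_n, x \rangle$ outputs $I_n(x) = f_I(x)$. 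Since this holds for every $n$ and every $x \in \{0,1\}^n$, $T$ correctly follows $\{I_n\}$, which is exactly the claim.

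The only points requiring care are bookkeeping rather than mathematics. First, \autoref{lemma:circuit-eval} assumes the circuit is serialized in prefix form, so either we take ``$\TC^0$ instruction description'' to mean the descriptions $I_n$ are given in this serialization, or we appeal to the remark in \autoref{sec:circuits} that the construction adapts to other serializations. Second, one should not conflate the index $n = \abs{x}$ parameterizing the instruction family with the transformer's actual input length $\abs{\langle I_n, x\rangle}$; the construction of \autoref{lemma:circuit-eval}, via fractional positional embeddings of the form $i/\abs{\langle I_n, x\rangle}$, is stated for inputs of arbitrary length, so feeding $T$ the string $\langle I_n, x\rangle$ separately for each $n$ is legitimate.

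There is no real obstacle here: the content is entirely in \autoref{lemma:circuit-eval}/\autoref{thm:circuit-eval}, and this corollary merely matches that construction against the instruction-following setup defined earlier in this subsection. The ``hard part'' is purely definitional --- verifying that ``correctly follows a non-uniform instruction family $\{I_n\}$'' (for all $n$ and all $x \in \{0,1\}^n$, correctly compute $f_I(x)$ on input $\langle I_n, x\rangle$) is literally what \autoref{thm:circuit-eval} delivers when each $I_n$ is a depth-$d$ $\TC^0$ circuit.
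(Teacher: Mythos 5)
Your proposal is correct and follows exactly the paper's route: the paper derives this corollary in a single sentence by noting that \autoref{thm:circuit-eval} (via \autoref{lemma:circuit-eval}) already constructs a single depth-$2d$ transformer evaluating \emph{any} depth-$d$ threshold circuit, and the corollary just applies this with $C = I_n$ for each $n$. Your extra care about the quantifier order in \autoref{lemma:circuit-eval} (one transformer for all circuits of the given depth), the prefix-form serialization assumption, and the distinction between the instruction-family index $n$ and the transformer's actual input length is exactly the bookkeeping the paper leaves implicit.
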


Thus, transformers with simple position embeddings, attention, and pooling functions can simulate any instruction provided in the form of a $\TC^0$ circuit. We note that while it is unknown whether the class of regular languages, considered by \citet{finlayson2022instruction}, is contained in $\TC^0$, the other side is known: there \emph{are} problems computable by $\TC^0$ circuits that are not computable by a regular language. These include problems involving counting and arithmetic, which are beyond regular languages. Our results thus expand the known kinds of instructions transformers are able to follow, at least with hand-constructed weights.

\subsection{Advice Transformers}
\label{subsec:advice}

We can also view circuit evaluation abilities of transformers (\autoref{lemma:circuit-eval}) from the lens of \emph{advice taking Turing machines} which, in addition to their usual input, are also provided an input length  dependent (but input independent) advice string. For instance, $\mathsf{P} / \mathsf{poly}$ is the class of problems decidable in polynomial time when the Turing machine is given an advice string of size polynomial in the input length \citep[cf.][]{arora2009computational}.

In the same vein, let $\mathsf{T} / \mathsf{poly}$ be the class of log-precision, constant-depth transformers with polynomial advice strings. In other words, on an input of size $n$, we allow the transformer to receive an additional $\mathrm{poly}(n)$ bits of input that cannot depend on the standard input. Now let $\{C_n\}_{n=1}^\infty$ be a circuit family demonstrating that a problem is in non-uniform $\TC^0$. Then, by passing the description of $C_n$ as advice for input length $n$, it immediately follows from \autoref{lemma:circuit-eval} that advice transformers can simulate non-uniform $\TC^0$:

\begin{corollary}
\label{cor:advice}
Non-uniform $\TC^0 \subseteq \mathsf{T} / \mathsf{poly}$ .
\end{corollary}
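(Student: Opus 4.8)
The plan is to instantiate the circuit-evaluating transformer of \autoref{lemma:circuit-eval} with the target non-uniform $\TC^0$ circuit family supplied as advice. First I would unpack the hypothesis: if a language $L$ is in non-uniform $\TC^0$, then by \autoref{def:tc0} there is a fixed constant $d$ and a family $\{C_n\}_{n=1}^\infty$ of poly-size, depth-$d$ threshold circuits such that $C_n$ recognizes $L$ on inputs of length $n$. Fix the serialization $\mathrm{enc}(C_n)$ of each $C_n$ in prefix form, as described in \autoref{sec:circuits}. Since $C_n$ has $\mathrm{poly}(n)$ gates and each gate contributes at most polynomially many symbols (its $\dir$/threshold block $\texttt{1}^k\texttt{0}^{m-k}$ together with its unary argument pointers, each of length less than the gate count), we get $\abs{\mathrm{enc}(C_n)} = \mathrm{poly}(n)$, so this is a legal polynomial advice string for inputs of length $n$.

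Next I would define the advice transformer: on input $x \in \{0,1\}^n$ it receives advice $\mathrm{enc}(C_n)$, forms the combined input $\langle \mathrm{enc}(C_n), x \rangle$, and runs the single depth-$2d$ transformer guaranteed by \autoref{lemma:circuit-eval} for depth-$d$ circuits. By that lemma, this transformer outputs $C_n(x)$, which by the choice of $\{C_n\}$ equals $1$ iff $x \in L$. Because $d$ is a constant, this transformer has constant depth; and because the total input length is $\mathrm{poly}(n)$, having $\O(\log \mathrm{poly}(n)) = \O(\log n)$ precision suffices for the fractional positional embeddings $i/n$ and for all intermediate values. Hence the construction is a log-precision, constant-depth transformer with polynomial advice, i.e., a member of $\mathsf{T}/\mathsf{poly}$, so $L \in \mathsf{T}/\mathsf{poly}$; since $L$ was an arbitrary non-uniform $\TC^0$ language, non-uniform $\TC^0 \subseteq \mathsf{T}/\mathsf{poly}$.

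The two things needing care, rather than any genuine obstacle, are: (i) confirming the advice depends only on the length $n$ and not on $x$, which holds because $C_n$ is fixed once $n$ is; and (ii) confirming that the polynomial blow-up from $\abs{x} = n$ to $\abs{\langle \mathrm{enc}(C_n), x \rangle} = \mathrm{poly}(n)$ does not break the log-precision budget, which it does not since the logarithm of a polynomial in $n$ is still $\O(\log n)$. All the substantive work---evaluating threshold gates layer by layer using saturated attention, fractional positional embeddings, and thresholded-linear pooling---is already carried out in \autoref{lemma:circuit-eval}, so the corollary follows essentially immediately once the advice interface is in place.
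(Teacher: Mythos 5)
Your proposal is correct and takes essentially the same approach as the paper: pass the serialized circuit $C_n$ as the length-$n$ advice string and invoke \autoref{lemma:circuit-eval} on $\langle \mathrm{enc}(C_n), x \rangle$. The paper treats the size and precision bookkeeping as immediate, whereas you spell it out explicitly, but the substance is identical.
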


Since non-uniform $\TC^0$ even contains some undecidable languages \citep[][Claim 6.8]{arora2009computational}, $\mathsf{T} / \mathsf{poly}$ is clearly a very powerful class and a strict superset of $\mathsf{T}$, the class of decision problems recognized by transformers (which are all decidable). Thus, a problem in $\mathsf{T} / \mathsf{poly}$ cannot always be solved by a transformer on its own. However, if given a description of \emph{how} to do so (``advice'') in the form of a $\TC^0$ circuit, our result shows that a transformer \emph{could} solve that problem.

\section{Conclusion}

Answering two open questions from \citet{merrill2022SatAttnTC0}, we prove log-precision transformers with any (including soft) attention can be simulated by \emph{uniform} constant-depth threshold circuits. This establishes \emph{thresholded addition} as a fundamental operation for understanding the computational model of transformers: any log-precision transformer can be re-expressed as a polynomial number of threshold gates stacked to a constant depth. This result also establishes potential limits on the computational power of log-precision transformers; e.g., if $\mathsf L \subset \mathsf P$, transformers cannot compute all poly-time functions. They are certainly very far from being universal. The intuition at the heart of this result is that forcing a model to be highly parallelizable likely sacrifices its expressiveness. Since parallelism seems essential to pretraining any massive model at scale, any large language model---transformer or otherwise---may suffer from a similar tradeoff.


\section*{Acknowledgments}
The authors are grateful for the valuable feedback from the anonymous reviewers and the TACL action editor Dan Gildea. They also thank Paul Beame and colleagues at AI2 including Kyle Richardson, Michal Guerquin, Peter Clark, Tushar Khot,
and especially Matthew Finlayson, whose empirical findings about instruction learning inspired \autoref{sec:lower}.
Feedback from Sam Bowman, Arya McCarthy, Roma Patel, and Lena Strobl, and discussions with the FLaNN, ML for Code (MILA), and Foundations of Language Processing (Umeå) research groups helped improve earlier drafts.
The authors also appreciate Rahul Santhanam's feedback.
This work was funded in part by NSF award 1922658. William Merrill was supported by an NSF graduate research fellowship
and by AI2.

\bibliography{references}
\bibliographystyle{acl_natbib}

\appendix

\section{Iterated $p$-Precision Float Addition} \label{sec:addition}

We interpret a $p$-bit string $x$ as a $p$-precision float by taking the first $p/2$ bits\footnote{We assume w.l.o.g.\ that $p$ is even.} of $x$ as a signed integer $m$ encoding the \emph{mantissa} and the remaining $p/2$ bits of $x$ as another signed integer $e$ encoding the \emph{exponent}. 
A float with mantissa $m$ and exponent $e$,
denoted $\langle m, e \rangle$,
encodes $m \cdot 2^e$.

Computing the sum of $n$ $n$-bit integers (known as iterated addition or simply summation) is well-known to be in uniform $\TC^0$~\cite{hesse2001division,Chiu2001DivisionIL}. We leverage this fact to show that the same holds for the sum of $n$ $\O(\log n)$-precision floats.
A subtlety of adding $p$-precision floats is that their sum can require more than $p$ bits to represent precisely as a float. For instance, while each of $2^r$ and $1$ is representable with a (signed) mantissa of only $2$ bits, their exact sum, $2^r + 1$, requires a mantissa of $r+1$ bits. Hence, $p$-precision transformers must sacrifice some precision when performing summation.

\newcommand{\Imin}{I^{\min}}
\newcommand{\Imax}{I^{\max}}

We define float addition by mapping the floats to integers, adding the integers exactly, and then mapping the sum back to a float (with possible loss of precision).
Let $\Imax_q = 2^q - 1$ be the greatest $q$-bit signed integer, and $\Imin_q = - \Imax_q$. \newB{Let $F_p^{\max}$ be the greatest value representable by a $p$-precision float.}
Since the exponent of a float $\phi$ can be negative \newB{and represent a fraction, we rescale $\phi$ by $2^{-\Imin_{p/2}}$ when mapping} it to an integer $g_p(\phi)$:


\begin{definition} \label{def:float-to-int}
    The integer mapping of a $p$-bit float $\phi = \langle m, e \rangle$ is defined as $g_p(\phi) = m \cdot 2^{e - \Imin_{p/2}}$.
\end{definition}

\begin{definition} \label{def:int-to-float}
    The $p$-truncated float mapping of an integer $z$ is defined as $f_p(z) = \langle m, e \rangle$
    where\footnote{For $x \neq 0$, $\sizeof(x) = \floor{\log \abs{x}} + 2$; $\sizeof(0) = 2$. For $y \geq 0$, $\rshift(x, y)$ right-shifts $x$ by $y$ bits}
    \begin{align*}
        m &= \rshift(z, \max \{0, \sizeof(z) - p/2 \} ) \\
        e &= \sizeof(z) - \sizeof(m) + I_{p/2}^{\min}
    \end{align*}
\newB{when $e \leq \Imax_{p/2}$;} otherwise (i.e., when $z > F_p^{\max}$), we set $m = e = I_{p/2}^{\max}$ to properly handle overflow.
\end{definition}

\begin{definition}[Iterated $p$-precision float addition]
We define the sum of $k$ $p$-precision floats as
\begin{equation*}
    \bigoplus_{i=1}^k \phi_i = f_p \left( \sum_{i=1}^k g_p(\phi_i) \right) .
\end{equation*}
\end{definition}

We first verify that \autoref{def:int-to-float} closely approximates exact addition.

\begin{lemma}
    Let $\phi = \langle e, m \rangle$ be a float such that $\abs{\phi} \leq F_p^{\max}$ and $e \geq I_{p/2}^{\min}$. Then $\phi$ and $f_p(g_p(\phi))$ differ by a factor of at most $1 \pm 2^{-p/2 + 2}$.
\end{lemma}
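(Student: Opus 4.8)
The plan is to unwind the two definitions and do careful bit-length bookkeeping. Write $z = g_p(\phi) = m \cdot 2^{e - \Imin_{p/2}}$ (\autoref{def:float-to-int}); since $e \geq \Imin_{p/2}$ by hypothesis, the shift amount $e - \Imin_{p/2}$ is nonnegative, so $z$ is a genuine integer and $\phi = z \cdot 2^{\Imin_{p/2}}$. Writing $\langle m', e' \rangle = f_p(z)$ (\autoref{def:int-to-float}), we have $f_p(g_p(\phi))/\phi = m' \cdot 2^{e' - \Imin_{p/2}}/z$, so it suffices to show this ratio lies within $1 \pm 2^{-p/2+2}$. The two hypotheses are precisely what keeps $g_p$ landing on an integer and keeps $f_p$ on its main (non-overflow) branch; I would verify the latter as a short side check once $e'$ has been computed.

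After disposing of $\phi = 0$ (where $z = 0$ and $f_p(0)$ encodes $0 = \phi$), I would split into two cases. The central fact is $\sizeof(z) = \sizeof(m) + (e - \Imin_{p/2})$, valid for $m \neq 0$ since multiplying by a nonnegative power of two is a left shift. \textbf{Case (i):} $\sizeof(z) \leq p/2$. Then the right-shift amount $\max\{0, \sizeof(z) - p/2\}$ is $0$, so $m' = z$, $\sizeof(m') = \sizeof(z)$, and hence $e' = \sizeof(z) - \sizeof(m') + \Imin_{p/2} = \Imin_{p/2}$; therefore $f_p(z) = z \cdot 2^{\Imin_{p/2}} = \phi$ \emph{exactly}. \textbf{Case (ii):} $\sizeof(z) > p/2$, so setting $s = \sizeof(z) - p/2 > 0$ we have $\abs{m'} = \floor{\abs{z}/2^s}$. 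From $2^{\sizeof(z)-2} \leq \abs{z} < 2^{\sizeof(z)-1}$ one obtains $2^{p/2-2} \leq \abs{m'} < 2^{p/2-1}$, i.e.\ $\sizeof(m') = p/2$ exactly, which pins $e' = s + \Imin_{p/2}$ and gives $f_p(z) = m' \cdot 2^{s} \cdot 2^{\Imin_{p/2}}$. Finally $\abs{z - m' 2^s} < 2^s$ while $\abs{z} \geq 2^{\sizeof(z)-2} = 2^{p/2+s-2}$, so $\abs{m' 2^s / z - 1} < 2^s/2^{p/2+s-2} = 2^{-p/2+2}$, which is the claimed bound (and in fact $m' 2^s/z \in (1 - 2^{-p/2+2},\, 1]$).

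The step I expect to be the main obstacle is the bit-length bookkeeping in Case (ii): proving $\sizeof(m') = p/2$ on the nose --- this is what forces the clean identity $e' = s + \Imin_{p/2}$ and hence makes the two $2^{\Imin_{p/2}}$ factors cancel --- together with checking that $\rshift$ on a signed integer perturbs the magnitude by less than $2^s$ regardless of the sign convention, and the side check $e' \leq \Imax_{p/2}$ (so that $f_p$ uses its main branch rather than the overflow value $m = e = \Imax_{p/2}$), which is where the hypothesis $\abs{\phi} \leq F_p^{\max}$ enters. Once those are settled, the relative-error bound is just the single inequality $\abs{z} \geq 2^{p/2+s-2}$.
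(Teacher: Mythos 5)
Your proposal follows essentially the same route as the paper's proof: map $\phi$ to the integer $z=g_p(\phi)$, split on whether $\sizeof(z)\leq p/2$ (exact recovery) or $\sizeof(z)>p/2$ (truncate the low-order $\ell = \sizeof(z)-p/2$ bits), and bound the dropped amount by $2^{\ell}$ against $\abs{z}\geq 2^{\sizeof(z)-2}$ to get the relative error $2^{-p/2+2}$. The only difference is that you make explicit the bookkeeping fact $\sizeof(m')=p/2$ that pins $e'=\ell+\Imin_{p/2}$ (and flag the overflow side-condition where $\abs{\phi}\leq F_p^{\max}$ enters), which the paper uses implicitly when it says $m'$ is ``the $p/2$ highest-order bits of $z$''; this is a correct and slightly more careful rendering of the same argument.
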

\begin{proof}
    Let $z = g_p(\phi)$, which is well-defined because of the precondition $e \geq \Imin_{p/2}$ of the lemma. Let $\phi' = \langle m', e' \rangle = f_p(z)$.
    
    First consider the easy case where $\sizeof(z) \leq p/2$. Then $m' = z$ and $e' = \Imin_{p/2}$ from \autoref{def:int-to-float}. Since $z = m \cdot 2^{e - \Imin_{p/2}}$ by \autoref{def:float-to-int}, it follows that $\phi$ and $\phi'$ have exactly the same value.

    Now assume $\sizeof(z) > p/2$. It follows from the precondition $\abs{\phi} \leq F_p^{\max}$ of the lemma that there is no overflow when applying \autoref{def:int-to-float} to compute $\langle m', e' \rangle$. Thus $m'$ consists of the $p/2$ highest-order bits (including the sign bit) of $z$ and $e' = \ell + \Imin_{p/2}$, where $\ell = \sizeof(z) - p/2$ is the number of bits truncated from $z$ to obtain $m'$. Let $\delta$ denote the (non-negative) integer formed by the $\ell$ lowest-order bits of $z$ that are truncated. Then $\delta \leq 2^{\ell} - 1 = 2^{\sizeof(z) - p/2} - 1 < z \cdot 2^{-p/2 + 2}$.

    Recall that the value of $\phi$ is $g_p(\phi) \cdot 2^{-\Imin_{p/2}} = z \cdot 2^{-\Imin_{p/2}}$. By the above argument, we also have that the value of $\phi'$ is within $(z \pm \delta) \cdot 2^{-\Imin_{p/2}}$, which is within $z \cdot (1 \pm 2^{-p/2 + 2}) \cdot 2^{-\Imin_{p/2}}$. Thus, $\phi$ and $\phi'$ are within a factor of $1 \pm 2^{-p/2 + 2}$ of each other.
\end{proof}

Finally, we show that, with log precision, computing $\oplus$ (\autoref{def:int-to-float}) is in uniform $\TC^0$.

\begin{lemma}
    Let $p \leq c \log n$ and $\phi = \bigoplus_{i=1}^k \phi_i$, where $k \leq n$ and each $\phi_i$ is $p$-precision. Then $\phi$ is computable by a constant-depth uniform threshold circuit of size $\poly(n)$.
\end{lemma}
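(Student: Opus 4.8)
The proof will simply unfold the definition $\bigoplus_{i=1}^k \phi_i = f_p\!\left(\sum_{i=1}^k g_p(\phi_i)\right)$ and realize its three ingredients --- the integer maps $g_p$ (\autoref{def:float-to-int}), the exact iterated integer sum, and the truncated float map $f_p$ (\autoref{def:int-to-float}) --- each by a logspace-uniform, constant-depth, $\poly(n)$-size threshold circuit, and then compose them. Composing a constant number of such circuits is harmless: a logspace machine can print the three circuits one after another and rewire the output gates of each stage into the input gates of the next, which only requires $\O(\log n)$-bit arithmetic on gate indices (all sizes here are $\poly(n)$). So it suffices to handle the three stages separately.

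\textbf{Stages 1 and 2.} For Stage 1, each $\phi_i$ has only $p \le c \log n$ bits, and by \autoref{def:float-to-int} $g_p(\phi_i) = m \cdot 2^{e - \Imin_{p/2}}$ with $|m| < 2^{p/2}$ and $0 \le e - \Imin_{p/2} \le 2\Imax_{p/2} < 2^{p/2+1}$, so $g_p(\phi_i)$ always fits in a fixed number $B = \O(2^{p/2}) = \O(n^{c/2}) = \poly(n)$ of bits. The map sending the $\O(\log n)$ bits of $\phi_i$ to the $B$ bits of $g_p(\phi_i)$ (a variable left-shift of the mantissa, with sign extension) is computable in $\O(\log n)$ space, so \autoref{lem:ff} produces a logspace-uniform depth-$3$ $\poly(n)$-size circuit for it; I instantiate one copy per $i$ in parallel. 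For Stage 2 I must form the exact sum of $k \le n$ signed $B$-bit integers. Exact iterated addition of $N$ $N$-bit integers is in (DLOGTIME- and hence logspace-) uniform $\TC^0$ \citep{hesse2001division,Chiu2001DivisionIL}; instantiating that family at $N := \poly(n)$ (padding to $N$ integers of $N$ bits) stays constant-depth, $\poly(n)$-size, and logspace-uniform since $\log N = \O(\log n)$. Signs are handled by an $\AC^0$ preprocessing step that routes each summand (selected by its sign bit) either unchanged into a ``positive'' bucket or two's-complement-negated into a ``negative'' bucket (with zeros elsewhere), summing each bucket with the unsigned circuit and taking one final $\AC^0$ subtraction. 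Call the result $z$, a $\poly(n)$-bit integer; this is the only stage that genuinely needs threshold gates, matching the paper's thesis.

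\textbf{Stage 3.} Here \autoref{lem:ff} is not available --- $z$ has $\poly(n)$, not $\O(\log n)$, bits --- so I argue directly that $z \mapsto f_p(z)$ is in logspace-uniform $\AC^0 \subseteq \TC^0$. By \autoref{def:int-to-float}, $f_p(z) = \langle m, e \rangle$ with $m = \rshift(z, \max\{0, \sizeof(z) - p/2\})$, $e = \sizeof(z) - \sizeof(m) + \Imin_{p/2}$, and an overflow clamp of $m$ and $e$ to $\Imax_{p/2}$ when $e > \Imax_{p/2}$ (equivalently when $z > F_p^{\max}$). Each piece is a standard uniform-$\AC^0$ operation: (i) $\sizeof(z)$ is the index of the leading nonzero bit of $|z|$, where $|z|$ comes from $z$ by an $\AC^0$ conditional two's-complement negation, and the leading-bit index is an OR over positions of an AND of literals; (ii) the variable right-shift yielding the $p/2 = \O(\log n)$ bits of $m$ is an $\AC^0$ multiplexer --- output bit $i$ of $m$ equals $\bigvee_v \bigl( [\sizeof(z) - p/2 = v] \wedge [\text{bit } i{+}v \text{ of } z] \bigr)$, a $\poly(n)$-fan-in OR; (iii) the arithmetic among $\sizeof(z)$, $\sizeof(m)$, $e$ is addition/subtraction/comparison of $\O(\log n)$-bit numbers, and the overflow clamp is one comparison plus a multiplexer. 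All of this is logspace-uniform.

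\textbf{Main obstacle.} The routine-but-fiddly part is Stage 3: since the intermediate sum $z$ is polynomially long, \autoref{lem:ff} cannot be reused, and one has to verify by hand that ``find the leading one'', ``shift by a data-dependent amount'', and the surrounding small-integer arithmetic and overflow handling all lie in logspace-uniform $\AC^0$; alongside this, the uniformity bookkeeping --- choosing the padded size $N$, computing per-stage gate counts, and the cross-stage rewiring indices, all within $\O(\log n)$ space --- must be made explicit. Everything else reduces cleanly to \autoref{lem:ff} and to the classical $\TC^0$-membership of iterated integer addition.
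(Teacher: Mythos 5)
Your proof is correct and follows essentially the same three-stage decomposition as the paper: map each float to a padded integer via \autoref{lem:ff}, sum with the classical uniform-$\TC^0$ iterated-addition circuit, and map back via $f_p$ with an $\AC^0$ circuit. You are actually more explicit than the paper on the two points it glosses over---handling signed summands in Stage~2 and, in Stage~3, justifying from scratch (leading-one detection, data-dependent shift as a multiplexer, small-integer arithmetic, overflow clamp) that $f_p$ lies in logspace-uniform $\AC^0$, where \autoref{lem:ff} indeed cannot be invoked because $z$ is polynomially long.
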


\begin{proof}
    Let $N = c \log n + 2 n^c$. We first use \autoref{lem:hao-extended} to map each $\phi_i = \langle m_i, e_i \rangle$ to the integer $z_i = m_i \cdot 2^{e_i - \Imin_{p/2}}$, which has size $\sizeof(m_i) + (e_i - \Imin) \leq p/2 + 2 \cdot 2^{p/2} \leq c \log n + 2 n^c = N$. For $1 \leq i \leq k$, we pad $z_i$ to $N$ bits, and for $k < i \leq N$, we create an $N$-bit integer $z_i = 0$. We can then compute $z = \sum_{i=1}^k z_i$ with a constant-depth uniform threshold circuit of size $\poly(N)$ using the classical construction to sum $N$ $N$-bit integers \citep[cf.][exercise~5.29]{immerman2012descriptive}. The size of this circuit is also polynomial in $n$ by the definition of $N$. Finally, we compute $f^\dagger(z)$ using a constant-depth AND/OR circuit.
\end{proof}

\end{document}